\tikzset{
	block/.style={rectangle, draw, rounded corners, text centered, text width = 7em, minimum height = 2em},
	line/.style={draw, -latex'}
}
\def\d{\,\mathrm{d}}
\def\laweq{\buildrel \d \over =}
\newcommand{\VaR}{\mathrm{VaR}}
\newcommand{\ES}{\mathrm{ES}}
\newcommand{\E}{\mathbb{E}}
\newcommand{\R}{\mathbb{R}}
\newcommand{\calR}{\mathcal{R}}
\newcommand{\M}{\mathcal{M}}
\newcommand{\p}{\mathbb{P}}
\newcommand{\id}{\mathds{1}}
\newcommand{\X}{\mathcal X}
\renewcommand{\P}{\mathcal P}
\newcommand{\F}{\mathcal{F}}
\newcommand{\Q}{\mathcal Q}
\newcommand{\essinf}{\mathrm{ess\mbox{-}inf}}
\renewcommand{\ge}{\geqslant}
\renewcommand{\le}{\leqslant}
\renewcommand{\geq}{\geqslant}
\renewcommand{\leq}{\leqslant}
\renewcommand{\epsilon}{\varepsilon}
\theoremstyle{plain}
\newtheorem{theorem}{Theorem}
\newtheorem{proposition}{Proposition}
\theoremstyle{definition}
\newtheorem{definition}{Definition} 
\newtheorem{assumption}{Assumption} 
\newtheorem{example}{Example} 
\theoremstyle{remark}
\newtheorem{remark}{Remark} 
\theoremstyle{definition}
\renewcommand{\cite}{\citet} 
\newcommand{\com}[1]{\marginpar{{\begin{minipage}{0.18\textwidth}{\setstretch{1.1} \begin{flushleft} \footnotesize \color{red}{#1} \end{flushleft} }\end{minipage}}}}
\begin{document}
\title{A Framework for Measures of Risk under Uncertainty}

\author{Tolulope Fadina\thanks{School of Mathematics, Statistics and Actuarial Science, University of Essex, UK. \Letter~\texttt{t.fadina@essex.ac.uk}.}
	\and Yang Liu\thanks{Corresponding author. School of Science and Engineering, The Chinese University of Hong Kong, Shenzhen, China.  \Letter~\texttt{yangliu16@cuhk.edu.cn}.} 
	\and Ruodu Wang\thanks{Department of Statistics and Actuarial Science, University of Waterloo, Canada. \Letter~\texttt{wang@uwaterloo.ca}.}}

\date{}
\maketitle
\begin{abstract}

	A risk analyst assesses potential financial losses based on multiple sources of information. Often, the assessment does not only depend on the specification of the loss random variable but also various economic scenarios. Motivated by this observation, we design a unified axiomatic framework for risk evaluation principles which quantifies jointly a loss random variable and a set of plausible probabilities. We call such an evaluation principle a generalized risk measure. We present a series of relevant theoretical results. The worst-case, coherent, and robust generalized risk measures are  characterized via different sets of intuitive axioms. We establish the equivalence between a few natural forms of law invariance in our framework, and the technical subtlety therein reveals a sharp contrast between our framework and the traditional one. Moreover, coherence and strong law invariance are derived from a combination of other conditions, which provides additional support for coherent risk measures such as Expected Shortfall over Value-at-Risk, a relevant issue for risk management practice. 
	
	
	~
	
	\noindent \begin{bfseries}Keywords\end{bfseries}:  Risk management; model uncertainty; regulatory capital; variational preferences; law invariance; decision theory.

\end{abstract}
~

\newpage

\newpage

\section{Introduction} \label{sec:intro}
 
Risk measures are widely used in both financial regulation and economic decisions.  Since the seminal work of \cite{ADEH99}, risk measures are commonly defined as  functionals on a space of random variables, or, with the assumption of law invariance, on the set of their distribution functions.  The most popular risk measures are the Value-at-Risk (VaR) and the Expected Shortfall (ES); see \cite{ADEH99} and \cite{FS16} for the classic theory of risk measures,  
and documents from the Basel Committee on Banking Supervision (BCBS), e.g., \cite{BASEL19}, for regulatory practice in banking.

In this paper, we propose a novel framework for measures of risk under uncertainty. 
Let us first explain our motivation. We take market risk as our primary example, although our discussions naturally apply to many other types of risks. A portfolio is associated with a future loss random variable $X$ representing the portfolio risk.
The loss $X$ has two important practical aspects: the \emph{specification} and the \emph{modeling}. 

\begin{enumerate}
	\item 
	The specification refers to how $X$ is defined in terms of the underlying risk factors (e.g., asset prices, exchange rates, credit scores, volatilities, etc). More precisely, $X$ is the financial loss (or gain) from holding assets, derivatives, or other investments in the portfolio. 
	Mathematically, the specification of $X$ is represented by the function $X:\Omega\to \R$, which maps each state of the future financial world (each element of the sample space $\Omega$) into a realized loss.
	\item The modeling refers to the statistical assessment of the likelihood and the severity of the loss $X$.
	The modeling of $X$ is usually summarized by a distribution, or a collection of distributions in case of model uncertainty, under some estimated or hypothetical (e.g., in stress testing) probability measures $\p\in \P$, where $\P$ is the set of probability measures on the sample space $\Omega$.
\end{enumerate}

In the classic framework of \cite{ADEH99}, a  risk measure $\rho$ is defined on a set $\X$ of random variables, and the risk value $\rho(X)$ is thereby determined by the specification of $X$. 
The modeling of $X$ is, however, implicit in this setting: through a given probability $\p $, assuming available to the end-user,
the distribution of $X$ under $\p$ is determined by its specification. 

There is a visible gap in the classic setting $\rho:\X\to\R$. In practice, neither $X$ nor $\p$ is generally fixed.
A change in $X$ means adjusting positions via trading financial securities. 
A change in $\p$ means an update of the modeling, estimation, and calibration of the random world. 
In financial practice, both $X$ and $\p$ evolve on a daily basis for a trading desk; yet they are modified daily for very different reasons. 

For another concrete example, suppose that a regulator specifies a risk measure (e.g., ES at level $0.975$ as in \cite{BASEL19}), and two firms assess the risk of the same portfolio separately. Due to different modeling and data processing techniques used by the two firms, their reported ES values are typically not the same. However, the loss random variable $X$ from the portfolio is the same for both firms. Therefore, the risk measure should not be only determined by the specification of $X$, but also the modeling information. In practice, modeling is always subject to uncertainty (called ambiguity in   decision theory). Even in the simple estimation of a parametric model, the plausible models are not unique; see \cite{GS89} for a classic treatment of ambiguity.

Motivated by the above observations, we propose a new framework of risk measures taking into account both the specification and the modeling of random losses.
We choose a set of probability measures instead of a single probability measure as the input for the modeling component. Formally, we introduce the \emph{generalized risk measure} $\Psi:\X\times 2^\P \to [-\infty,\infty]$, which has two input arguments: a random variable $X\in \X$ representing the specification of the loss,  and a set of probability measures $\mathcal Q\subseteq \P$ representing the modeling of the random world; each probability measure in $\P$ is called a \emph{scenario}.  Our framework includes the traditional law-invariant risk measures as a special case when $\mathcal Q$ is a pre-specified singleton. When $\mathcal Q$ is fixed but not a singleton, our generalized risk measures include the scenario-based risk measures of \cite{WZ18}. The framework also incorporates other complicated decision criteria addressing model uncertainty in the literature, which will be discussed later. 

We take the perspective of a regulator, who designs a regulatory capital assessment scheme that will be complied with by financial institutions. Financial institutions (or their trading desks) can choose their portfolio positions with losses $X\in \X$,
and, subject to passing regulatory backtests for statistical prudence, they can also choose their internal models $\mathcal Q\subseteq \P$. The generalized risk measure is crucial to the design of the capital assessment procedure, because it acts on portfolios and models from financial institutions, and computes regulatory capital requirements. Therefore, our theoretical framework closely resembles the regulatory practice in the Fundamental Review of the Trading Book (FRTB) of \cite{BASEL19}; see \cite{WZ18} for discussions on the risk assessment practice of FRTB,
and \cite{CF17} for  model and scenario aggregation methods in solvency assessment. 
 It is important to note that the input scenario set $\mathcal Q$ does not necessarily contain the decision maker's subjective probability governing the random world, because most models are simplifications or approximations, as argued by \cite{CHMM21}. 

Figure \ref{motivating-example} illustrates  a stylized risk assessment procedure, reflecting many of the above considerations. There are four roles: regulator (external), risk analyst (internal), portfolio manager (internal), and model risk manager (internal). In Figure \ref{motivating-example}, except for the regulator's actions, the other actions are changing dynamically on a daily (or similar) basis, making it clear that one should take both $\mathcal Q$ and $X$  as inputs and allow them to vary in a unified framework.
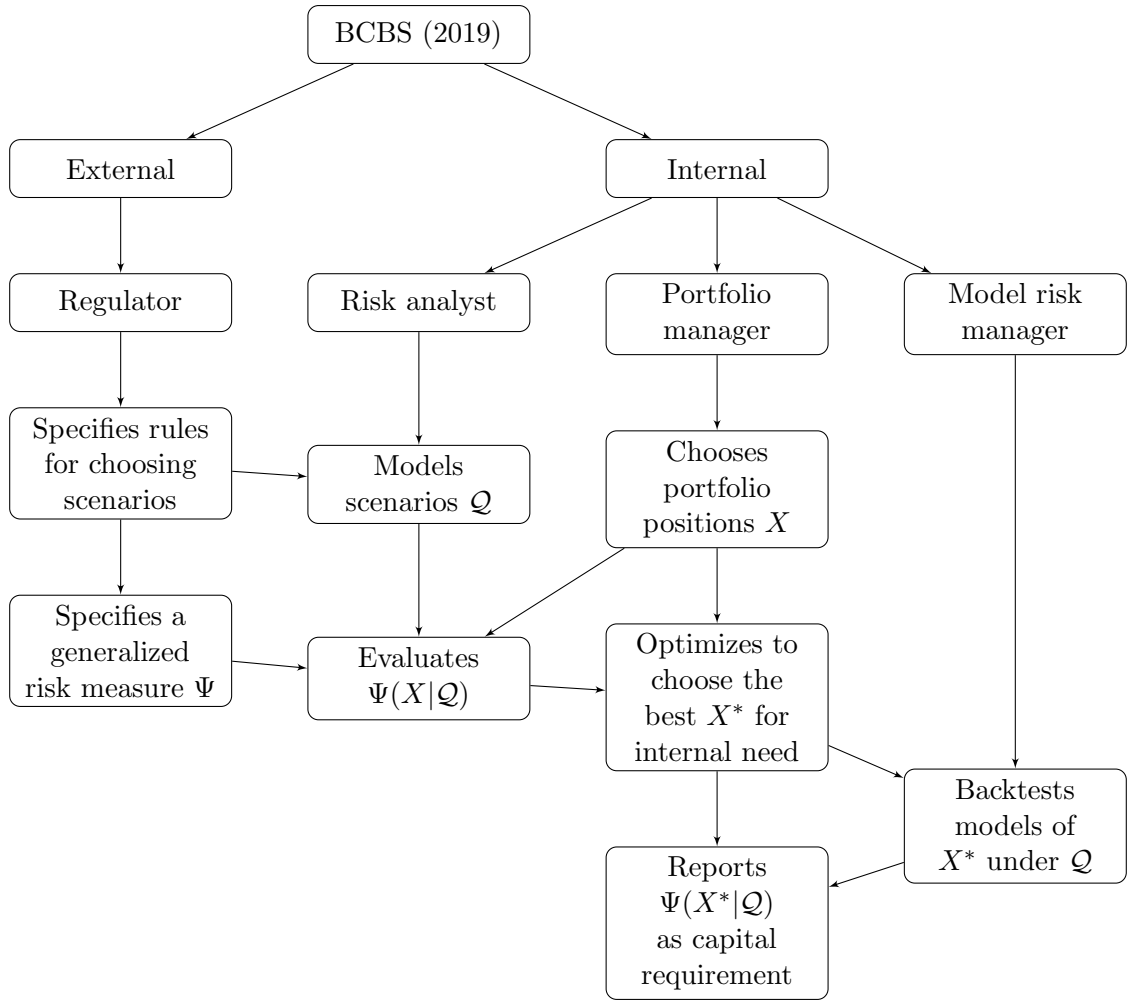
\begin{figure}		
	\begin{center}
		\begin{tikzpicture}[every node/.style={block}]
			\node (topic) {BCBS (2019)};
			\node[below left=1cm and 1cm of topic](case one) {External};
			\node[below right = 1cm and 1cm of topic](case three) {Internal};
			\node[below = 1cm of case one](regulator){Regulator};
			\node[below = 1cm of regulator](role regulator){Specifies rules for choosing scenarios};
			\node[below = 1cm of role regulator](role regulator2){Specifies a generalized risk measure $\Psi$};
			\node[below left = 1cm and 1cm of case three](risk) {Risk analyst};
			\node[below = 1.5cm of risk](estimate) {Models scenarios $\mathcal Q$};
			\node[below = 1.5cm of estimate](evaluate) {Evaluates $\Psi(X\vert \mathcal Q)$};
			\node[below = 1cm of case three](more){Portfolio manager};
			\node[below = 1cm of more](choose){Chooses portfolio positions $X$};
			\node[below = 1cm of choose](optimize){Optimizes to choose the best $X^*$ for internal need};
			\node[below = 1cm of optimize](report){Reports $\Psi(X^*\vert \mathcal Q)$ as capital requirement};
			\node[below right = 1cm and 1cm of case three](model) {Model risk manager};
			\node[below = 5.5cm of model](backtest) {Backtests models of $X^*$ under $\Q$};

			\foreach \x/\y in {
				topic/case one,
				topic/case three,
				case one/regulator,
				regulator/role regulator,
				role regulator/estimate,
				role regulator/role regulator2,
				case three/more,
				case three/risk,
				case three/model,
				more/choose,
				risk/estimate,
				estimate/evaluate,
				role regulator2/evaluate,
				choose/evaluate,
				evaluate/optimize,
				choose/optimize,
				optimize/report,
				optimize/backtest,
				model/backtest,
				backtest/report}
			\draw [line] (\x) -- (\y);
		\end{tikzpicture}
	\end{center}
	\caption{A stylized procedure for risk assessment practice}
	\label{motivating-example}
\end{figure}

\subsection{Contribution and structure of the paper}
As explained above, in the literature on the axiomatic theory of risk measures, one often first  designs axioms to identify desirable risk measures without model uncertainty, and then puts model uncertainty into the model as an exogenous object. This approach, although easy to apply, is unsatisfactory from a decision-theoretic point of view, as it does not identify desirable axioms for risk measures when model uncertainty is taken as input. One of our main contributions is to provide an axiomatic framework of generalized risk measures which allows us to consider properties on both the model uncertainty and the random losses, thus addressing this practical issue for the first time. The rigorous mathematical formulation of generalized risk measures is laid out in Section \ref{sec:framework}.

Since generalized risk measures are defined as a mapping from $\X\times 2^\P$ to the (extended) real line, the mathematical structure is much more complicated than traditional risk measures.  
We establish several relevant theoretical results related to this new framework. 
In Section \ref{sec:worst}, we obtain an axiomatic characterization of  worst-case generalized risk measures via some simple properties (Theorem \ref{th:worstcase}). The worst-case generalized risk measures are the most practical and they appear extensively in the literature on risk measure and optimization.


Law invariance is a crucial property that connects loss random variables to statistical models. In the traditional framework, law-invariant risk measures can be equivalently expressed as functionals on a space of distributions; this is no longer true in our generalized framework. We provide three different forms of law invariance which reflect different considerations: strong law invariance, loss law invariance, and scenario law invariance; see Section \ref{sec:law invariance} for details.
In general, the three notions of law invariance are not equivalent and they reflect very different modeling considerations. Indeed, if strong law invariance is assumed, our framework can be converted to the traditional setting without many mathematical difficulties. 
However, in practice, strong law invariance may not be desirable, and technical complications arise when it has to be weakened. 
In Section \ref{sec:law invariance}, we show the equivalence between strong law invariance and a combination of the two weaker versions of law invariance under mild conditions (Theorem \ref{thm:law-inv}). Moreover, we express the worst-case generalized risk measures with various kinds of law invariance as functions defined on the distributions (Proposition \ref{th:law invariance}). Therefore, from traditional law-invariant risk measures defined on distributions, we can easily construct generalized risk measures satisfying certain desirable properties. 

In Section \ref{sec:connect}, we focus on coherent generalized risk measures, an analog to the coherent traditional risk measures of \cite{ADEH99}, and characterize the simplest form (expectation-type) in Theorem \ref{th:coh}. Moreover, we propose the notion of ambiguity sensitivity and establish an equivalence between strong law invariance and a combination of a weaker law invariance and ambiguity sensitivity (Theorem \ref{th:new2}). In addition, the combination, with a few simple properties, implies coherence, which supports coherent risk measures in the traditional framework from a completely novel perspective.

In Section \ref{sec:connection}, we discuss some connections of our framework to decision theory. Particularly, we characterize the multi-prior expected utility of \cite{GS89} with several properties (Proposition \ref{th:preference}) and obtain an axiomatic characterization for the robust generalized risk measure (Proposition \ref{th:robust}), closely related to the variational preferences of \cite{MMR06}. 
Section \ref{sec:conclusion} contains further discussions and remarks. Proofs of all theorems and propositions are put in the appendix.

\subsection{Connections to other frameworks in the literature}
Our framework is in sharp contrast to the existing ones in the literature on risk management.  We have already discussed the difference between our framework and the classic frameworks of risk measures (see \cite{ADEH99} and \cite{FS16}) or preferences (see \cite{W10} for a comprehensive treatment), which are all defined on $\X$. 
The setting of scenario-based risk measures of \cite{WZ18} is  also motivated by the regulatory framework of \cite{BASEL19} and aims to understand uncertainty in risk measures, but it is mathematically quite different.
Scenario-based risk measures are mappings on $\X$  determined by   the distributions of the random losses   under a collection of pre-specified scenarios. Since the scenarios are fixed, the key question of how a risk measure reacts when scenarios change is left unaddressed. As such, the mathematical results in this paper have no overlap with \cite{WZ18}.

Model uncertainty is an important topic in economic decision theory. In the classic setting of \cite{AA63}, a risk (called a lottery) is represented by a collection of possible distributions, whereas, in our framework,  the input consists of a random variable and  a collection of  probability measures, which interact with each other. 
There are many recent developments in this stream of literature, which focus on the characterization of preferences under uncertainty via some axioms. For a non-exclusive list, we mention  the multi-prior expected utility of \cite{GS89}, the multiplier preferences of \cite{HS01}, the smooth ambiguity preference of \cite{KMM05}, the variational preference of \cite{MMR06},  and the model misspecification preference of  \cite{CHMM21}. They can be formulated as  examples of our framework, which will be illustrated in Example \ref{ex:literature}.

 Some conceptual frameworks in decision theory reflect similar considerations towards risk and uncertainty as ours. In particular, 
\cite{CHMM21} studied preferences under model misspecification, and their set of structured models corresponds to our set of scenarios $\mathcal Q$. 
An earlier  work closely related to our framework is  \cite{GHTV08}, where the authors studied preferences defined on the outcome mapping (an act) and the set of possible probabilities; thus conceptual similarity is clear. 
Nevertheless, since the main context  of our work is financial risk assessment instead of decision making, the axioms and properties considered in this paper, as well as  technical results and their implications, are completely different from \cite{GHTV08} and \cite{CHMM21}. 

In the operations research literature,   \cite{DKW19} recently investigated a model for decision making with and without uncertainty, and analyzed the conditions under which random decisions are strictly better than deterministic ones. 
Model uncertainty also widely appears in   robust optimization; see \cite{EOO03}, \cite{ZF09} and \cite{ZKR13} for optimizing risk measures under uncertainty. In the above literature, model uncertainty is generally  pre-specified and regarded as an objective fact, whereas we study the properties of risk measures taking model uncertainty as an input argument that can vary over all possible choices.

\section{A framework for measures of risk and uncertainty}
\label{sec:framework}

\subsection{Notation} 

We begin by stating some notation which will be used throughout. 
Let $(\Omega, \mathcal F)$ be a measurable space and let $\mathcal P$ be the class of atomless probability measures   defined on the measurable space.\footnote{A probability measure $P$ on $(\Omega, \mathcal F)$ is atomless if there exists a uniform random variable on $(\Omega, \mathcal F,P)$.}
The set of all subsets of $\mathcal P$ is denoted by $2^\mathcal P$.
Let $\mathcal X $ be the space of 
bounded
random variables,
and  $\M$ be the set of compactly supported distributions on $\R$.
For $X,Y\in \X$ and $P, Q \in \mathcal P,$ we write
$X\vert _P \laweq Y  \vert  _Q$
if the distribution of $X$ under $P$ is identical to that of $Y$ under $Q$. 
We denote by $F_{X\vert P} \in \M$ the distribution of $X$ under $P$. 
For an increasing set function $\nu : \mathcal F\to \R$ with $\nu(\varnothing)=0$, 
a Choquet integral (e.g., Definition 4.76 of \cite{FS16}) with respect to $\nu$ is defined as  
$$
\int X \d \nu = \int_{-\infty}^{0} (\nu (X \geq x) - \nu ( \Omega)) \d x +  \int_{0}^{\infty} \nu (X \geq x)  \d x ,~~~~ X \in \mathcal X.
$$  
We note in the following example that the same random variable $X$ can be continuously distributed under one atomless probability measure $P$ and discretely distributed under another atomless probability measure $Q$. Working with atomless probability measures allows   us to study  continuously distributed as well as discrete  random variables in a unified framework. 
\begin{example}  Consider $(\Omega,\mathcal F)=([0,1]^2,\mathcal B([0,1]^2))$ where $\mathcal B$ is the Borel $\sigma$-algebra. 
	Let $P = \lambda  \times \lambda$ and $Q=\delta_1 \times \lambda$, where $\lambda$ is the Lebesgue measure on $[0,1]$ and $\delta_1$ is the point mass at $1$.  Note that both $P,Q$ are atomless probability measures. Let $X(s,t)=s$ for $(s,t)\in [0,1]^2$. 
	The distribution of $X$ under $P$ is  the uniform distribution on $[0,1]$ and the distribution of $X$ under $Q$ is the point mass at $1$.  
\end{example}


%

\subsection{A new and generalized framework for risk measures}

Traditionally, risk measures in the sense of \cite{ADEH99} and \cite{FS16} are mappings from $\X$ to $\R$. We will call them  \emph{traditional} risk measures. 
Note that a traditional risk measure does not require the specification of a probability measure unless we additionally assume law-invariance; 
  this will be further discussed in Section \ref{sec:law invariance}.

In the new framework that we work with in this paper, the input of a risk measure 
is a combination of the loss $X$ and a set of possible probability measures $\mathcal Q$ that represents the best knowledge of the underlying random nature. 
To distinguish from the traditional  setting, we shall refer to these functionals as \emph{generalized} risk measures.  

\begin{definition}\label{def:1}
	A  \emph{generalized risk measure}\footnote{We use the notation $\Psi(X\vert \mathcal Q)$ instead of $\Psi(X,\mathcal Q)$ to emphasize the different roles of $X\in \X$ and $\mathcal Q\in 2^{\P}$.}  is a mapping $\Psi:\X\times 2^\P \to [-\infty,\infty]$. The generalized risk measure $\Psi$ is called \emph{standard} if ${\Psi(s\vert \Q) = s}$ for all ${s\in \R}$ and ${\Q\subseteq \mathcal P}$.
	A \emph{single-scenario risk measure} is a mapping $\Psi:\X \times \P\to [-\infty,\infty]$. 
\end{definition}

Clearly, a single-scenario risk measure is precisely a generalized risk measure with its second argument confined to singletons of scenarios.  For a singleton $\{P\}\subseteq \mathcal P$, we use the simpler notation ${\Psi(X\vert P)=\Psi(X\vert \{P\})}$. 
	 For any fixed $P$, the mapping $X\mapsto \Psi(X\vert P)$is a risk measure in the traditional sense.

The requirement of standardization reflects the consideration that, for any fixed constant $s$, 
	$\Psi(s\vert \mathcal Q)$ does not depend on the input scenarios $ \mathcal Q$. 
The range of $\Psi$ is chosen as $[-\infty,\infty]$ in our general framework to allow for the greatest generality. In practical applications, one may restrict the range to be $\R$  or $(-\infty,\infty]$.

Generalized risk measures are much more complicated as a mathematical object than traditional risk measures, since their input includes both a random loss $X$ and a set of probability measures $\mathcal Q$. 
Below we collect some basic properties to consider for a generalized risk measure $\Psi$.  
\begin{enumerate}[(A1)]
	\item[(A1)]\label{item:UA} Uncertainty aversion:   
	$\Psi(X\vert  \mathcal Q)\le \Psi(X\vert \mathcal R) $ for all $X\in \X$ and $\mathcal Q\subseteq \mathcal R\subseteq \mathcal P$.
	\item[(A2)]\label{item:UM} Scenario monotonicity: $\Psi(X\vert \Q) \leq \Psi(Y\vert \Q)$ if
	$\Psi(X\vert  P ) \leq \Psi(Y\vert  P )$ for all $ P \in \Q$.
	\item[(A3)]\label{item:UB} Scenario upper bound: $\Psi(X \vert \Q) \leq \sup_{P\in \mathcal Q} \Psi  (X\vert P)$ for all $X \in \X$ and $\mathcal Q \subseteq \P$.
\end{enumerate}
Property (A1) means that the evaluation of the risk weakly increases if model uncertainty increases, and this reflects an aversion to model uncertainty. 
Property (A2) means that, if under each possible scenario, $X$ is evaluated to be less risky than $Y$, then the overall evaluation of the risk of $X$ should not be more than $Y$. 
Property (A3) means that the overall evaluation of $X$ is not more extreme than $X$ evaluated under the worst-case scenario.

(A2) and (A3) are quite natural and they are satisfied by most examples of generalized risk measures in their various disguises in the risk management and decision theory literature; we will discuss some of them later. 

Property (A1) is more specialized, as it will lead to worst-case risk evaluation or decision making (Theorem \ref{th:worstcase} below) axiomatized in decision theory by \cite{GS89}. This property is not satisfied in models where uncertainty is aggregated in some form of averaging, such as taking a weighted average of risk evaluates such as the average ES of \cite{WZ18} or the smooth ambiguity model of \cite{KMM05}.  Indeed, if a new scenario $P$ is added to an existing collection of scenarios $\mathcal Q$, and a random loss $X$ is considered safe under $P$, then it may be desirable in risk management practice to reduce the assessment of riskiness of $X$  by including $P$, 
that is, 
 $\Psi(X \vert\{\mathcal Q,P\})< \Psi (X\vert \mathcal Q)$, violating (A1). 
 
 In decision theory, after a proper translation between two frameworks, the preferential version of (A2) appears in \cite{CHMM21} as \emph{$Q$-separability}, and (A1) is genuinely weaker than \emph{monotonicity in model ambiguity} of \cite{CHMM21} on preferences.
 For a fixed set of scenarios $\mathcal Q$,  (A1) and (A2) are, respectively,  similar to  \emph{ambiguity aversion} and \emph{ambiguity monotonicity}  in \cite{DKW19}, which are   formulated for distributions rather than random variables.

\subsection{Examples: VaR and ES}\label{sec:23}

We first give a few examples in this section, and more will be discussed later. 
The two popular traditional risk measures in banking and insurance are the Value-at-Risk (VaR) and the Expected Shortfall (ES); see \cite{EPRWB14} for a review. Both risk measures in the classic formulation are defined with a fixed scenario $P\in \P$,
and allowing $P$ to vary we can treat them as single-scenario risk measures in Definition \ref{def:1}. 
For a level $\alpha \in (0,1]$, the VaR  under $P$  is defined as 
$$\VaR _{\alpha} (X\vert P) = \inf \{x \in \mathbb R :   P(X\le  x) \geq \alpha  \} ,~~~~X\in \X,$$ 
and the ES under $P$ is defined as 
$$\ES_{\alpha}(X\vert P) =  \frac{1}{1-\alpha}  \int_{\alpha}^{1} \VaR_\beta (X\vert P) \d \beta,  ~~~~ X\in \mathcal X.$$   

We first show some properties of VaR and ES in our setting. 
These properties follow from existing properties of VaR and ES with a fixed $P$, 
but the concavity or convexity with respect to scenarios  is not formally studied in the literature since our framework is new.\footnote{The statement that $(X,P)\mapsto \VaR_\alpha(X\vert P)$  is neither convex nor concave in $X$ or $P$ may fail if $P$ is an atomic probability measure. 
For instance, if  $P$ is a discrete measure with probability mass $1/n$ on $n$ points,  then $\VaR_\alpha(\cdot | P)=\ES_\alpha(\cdot | P)$ for $\alpha>1-1/n$, making the statement false.
Recall that we work with atomless probability measures throughout. 
}
\begin{proposition}\label{prop:ES}
Fix $\alpha \in (0,1)$. 
The single-scenario risk measure $(X,P)\mapsto \ES_\alpha(X\vert P)$ is convex in $X$ and concave in $P$, whereas  $(X,P)\mapsto \VaR_\alpha(X\vert P)$  is neither convex nor concave in $X$ or $P$.
\end{proposition}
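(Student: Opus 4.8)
The plan is to treat ES and VaR by completely different means: ES through a single convex-analytic representation that disposes of both variables at once, and VaR through explicit counterexamples tailored to atomless spaces. For ES I would start from the Rockafellar--Uryasev variational formula, equivalent to the quantile-average definition (see, e.g., \cite{FS16}),
$$
\ES_\alpha(X\vert P)=\min_{x\in\R}\left\{x+\frac{1}{1-\alpha}\,\E^P\big[(X-x)^+\big]\right\}.
$$
Writing $g(X,x,P)=x+\frac{1}{1-\alpha}\E^P[(X-x)^+]$, both ES claims follow from the structure of $g$. For convexity in $X$: with $P$ fixed, $(X,x)\mapsto g(X,x,P)$ is jointly convex, since $(X,x)\mapsto(X-x)^+$ is convex and $\E^P$ is linear and monotone; partial minimization over $x$ of a jointly convex (and finite, as $X$ is bounded) function preserves convexity, so $X\mapsto\ES_\alpha(X\vert P)$ is convex. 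For concavity in $P$: with $X$ and $x$ fixed, $P\mapsto g(X,x,P)$ is affine, because $P\mapsto\E^P[(X-x)^+]$ is linear in the measure; hence $\ES_\alpha(X\vert\cdot)$ is a pointwise infimum of affine functions of $P$ and therefore concave. Concretely, $g(X,x,\lambda P+(1-\lambda)Q)=\lambda g(X,x,P)+(1-\lambda)g(X,x,Q)$, and minimizing over $x$ gives $\ES_\alpha(X\vert\lambda P+(1-\lambda)Q)\ge\lambda\ES_\alpha(X\vert P)+(1-\lambda)\ES_\alpha(X\vert Q)$. I would also record that $\lambda P+(1-\lambda)Q$ is again atomless, so that concavity is a meaningful statement within $\P$.

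For VaR no such representation is available, so I would exhibit counterexamples, using that $\VaR_\alpha(\cdot\vert P)$ is positively homogeneous: under positive homogeneity convexity is equivalent to subadditivity and concavity to superadditivity, so it suffices to display strict super- and sub-additivity, respectively. To defeat convexity in $X$, I would take independent $X_1,X_2$ under a fixed atomless $P$, each equal to $1$ with probability $q\in(1-\sqrt\alpha,\,1-\alpha)$ and $0$ otherwise; then $\VaR_\alpha(X_i\vert P)=0$, while $P(X_1+X_2=0)=(1-q)^2<\alpha$ and $P(X_1+X_2\le1)=1-q^2\ge\alpha$ give $\VaR_\alpha(X_1+X_2\vert P)=1$, so VaR is strictly superadditive and not convex. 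To defeat concavity in $X$, I would take $X=\one_A$ and $Y=\one_B$ with $P(A)=P(B)>1-\alpha$ but $P(A\cap B)\le1-\alpha$, which is achievable on an atomless space for every $\alpha\in(0,1)$; then $\VaR_\alpha(X\vert P)=\VaR_\alpha(Y\vert P)=1$, whereas $P(X+Y\le1)=1-P(A\cap B)\ge\alpha$ forces $\VaR_\alpha(X+Y\vert P)\le1<2$, so VaR is strictly subadditive and not concave.

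To show VaR is neither convex nor concave in $P$, I would fix one $X$ and mix two atomless measures under which $X$ is degenerate. Taking $\Omega=[0,1]^2$, $X(s,t)=s$, $P=\delta_0\times\lambda$ and $Q=\delta_1\times\lambda$ (with $\delta_0,\delta_1$ point masses and $\lambda$ the Lebesgue measure, all atomless), one has $X=0$ a.s.\ under $P$ and $X=1$ a.s.\ under $Q$, so under $R_\lambda=\lambda P+(1-\lambda)Q$ the variable $X$ is Bernoulli with $R_\lambda(X=0)=\lambda$ and $\VaR_\alpha(X\vert R_\lambda)=\one_{\{\lambda<\alpha\}}$. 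This single downward step in $\lambda$ violates convexity along the triple $(0,\alpha/2,\alpha)$ and violates concavity along the triple $(\alpha-\delta,\alpha,\alpha+\delta)$, settling both in the measure argument. The ES computation is genuinely short and the new content is only the observation that averaging against a fixed integrand makes $g$ affine in $P$; the real care-point is the one flagged in the footnote, namely that every VaR counterexample must live on atomless measures (the claim is false for atomic $P$), which is precisely why the degeneracy-under-an-atomless-measure device is essential, and why I would verify that all the mixtures appearing above remain in $\P$.
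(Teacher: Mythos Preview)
Your proof is correct and takes a genuinely different route from the paper's.

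The paper's argument hinges on one observation you do not use: since $F_{X\vert \lambda P+(1-\lambda)Q}=\lambda F_{X\vert P}+(1-\lambda)F_{X\vert Q}$, convexity or concavity in the scenario argument is exactly \emph{mixture} convexity or concavity of the induced functional on distributions. The paper then invokes Theorem~3 of \cite{WWW20}, which characterizes mixture convexity/concavity of signed Choquet integrals via convexity/concavity of the distortion function. Because ES has a concave distortion and VaR has one that is neither, all four claims about the $P$-argument drop out at once; the $X$-argument claims are handled by citing coherence of ES (\cite{FS16}) and again the distortion shape for VaR. Your approach instead uses the Rockafellar--Uryasev formula to settle both ES claims in a single stroke (partial minimization of a jointly convex function for $X$, infimum of affine functions for $P$), and disposes of VaR by explicit counterexamples in both arguments.

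What each buys: the paper's route is shorter and more conceptual, and immediately tells you the story for \emph{any} distortion risk measure, not just ES and VaR. Your route is fully self-contained, requiring no external characterization result, and the ES treatment is particularly clean. Your care in keeping every counterexample on atomless measures (including the $\delta_0\times\lambda$ device for the $P$-argument) is exactly the point flagged in the paper's footnote, and you handle it correctly.
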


Building on the single-scenario VaR and ES, we can define generalized risk measures such as worst-case VaR and worst-case ES, via
		\begin{equation*}
		 \overline{\mathrm{VaR}}_\alpha(X\vert \mathcal Q) =  \sup_{P\in \mathcal Q} \VaR_\alpha (X\vert P)~~~\mbox{and}~~~\overline{\mathrm{ES}}_\alpha(X\vert \mathcal Q) =  \sup_{P \in \Q} \ES_\alpha(X\vert P),~~~~(X,\mathcal Q)\in \X\times 2^{\mathcal P}.
		\end{equation*} 
 We refer to \cite{EOO03} for optimization of the worst-case VaR,  \cite{ZF09} for optimization of the worst-case ES, and \cite{WZ18} for their theoretical properties.
The worst-case VaR and the worst-case ES are both standard, and they satisfy (A1), (A2) and (A3). 

For a given fixed $\mathcal Q\subseteq \mathcal P$, 
several other examples of ES and VaR with aggregated scenarios, such as averages  (with respect to a pre-specified measure over $\mathcal Q$) 
 and inf-convolutions (for a finite $\mathcal Q$), are also considered by \cite{WZ18} and \cite{CCMTW21}. 
 For instance, we can define an average ES by 
\begin{equation}
\label{eq:avgES}  (X,\mathcal Q)\mapsto \int_{\mathcal {Q}} \ES_\alpha(X\vert Q) \d \mu_{\mathcal Q}(Q),
\end{equation}
where $\mu_{\mathcal Q}$ is a measure over $\mathcal Q$ for each $\mathcal Q\subseteq \mathcal P$. The average ES in \eqref{eq:avgES} is standard and it satisfies (A2) and (A3); it does not satisfy (A1) in general.
 We remark that, although sharing many common forms and examples, our framework is fundamentally different from the existing ones in the literature, as it is crucial for a generalized risk measure to use $\mathcal Q$ as an input variable instead of a pre-specified collection.

\section{Worst-case generalized risk measures}
\label{sec:worst}

In this section, we present our first theoretical result,  a characterization of generalized risk measures satisfying (A1) as the supremum of risk measures in the traditional sense. This result allows us to apply many results on traditional risk measures to generalized risk measures. 

\begin{theorem}\label{th:worstcase}
	For a generalized risk measure $\Psi: \X \times 2^\P \rightarrow \R$: 
	\begin{enumerate}[(i)] \item Suppose that $\Psi$ is standard. $\Psi$ satisfies (A1)-(A2) if and only if
	it admits a representation \begin{equation}\label{eq:def_unc}
		\Psi(X\vert \mathcal Q) = \sup_{P\in \mathcal Q} \Psi  (X\vert P),~~~~(X, \mathcal Q)\in \X\times 2^\P.
	\end{equation} 
	\item $\Psi$ satisfies (A1) and (A3) if and only if
	it admits a representation \eqref{eq:def_unc}.
	\end{enumerate}
\end{theorem}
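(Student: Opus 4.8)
The plan is to prove both directions of each equivalence, and I would organize the work around the observation that the ``hard'' implications in (i) and (ii) share a common backbone: deriving the supremum representation from (A1) together with either (A2) or (A3). The ``easy'' directions — checking that a functional of the form \eqref{eq:def_unc} satisfies the relevant axioms — I would dispatch first, since they are essentially bookkeeping.

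For the easy directions, suppose $\Psi$ admits representation \eqref{eq:def_unc}. To verify (A1), note that if $\mathcal Q\subseteq \mathcal R$ then the supremum over $\mathcal R$ is taken over a larger index set, so $\sup_{P\in\mathcal Q}\Psi(X\vert P)\le \sup_{P\in\mathcal R}\Psi(X\vert P)$, giving $\Psi(X\vert\mathcal Q)\le\Psi(X\vert\mathcal R)$. For (A2), if $\Psi(X\vert P)\le\Psi(Y\vert P)$ for every $P\in\mathcal Q$, then taking suprema preserves the inequality. For (A3), the representation makes $\Psi(X\vert\mathcal Q)=\sup_{P\in\mathcal Q}\Psi(X\vert P)$ an equality, so the inequality in (A3) holds trivially. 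I would also note that \eqref{eq:def_unc}, evaluated at a singleton $\{P\}$, returns $\Psi(X\vert P)$, so the representation is consistent with the single-scenario values and standardness is inherited from the single-scenario restriction. This covers the ``if'' direction of both (i) and (ii) at once.

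For the main direction, the key quantity is $S(X\vert\mathcal Q):=\sup_{P\in\mathcal Q}\Psi(X\vert P)$, and the goal is to show $\Psi(X\vert\mathcal Q)=S(X\vert\mathcal Q)$. The inequality $\Psi(X\vert\mathcal Q)\le S(X\vert\mathcal Q)$ is immediate in case (ii), being exactly (A3); in case (i) I would instead derive it from (A1)-(A2). The reverse inequality $\Psi(X\vert\mathcal Q)\ge S(X\vert\mathcal Q)$ is where (A1) does its work: for any fixed $P\in\mathcal Q$ we have $\{P\}\subseteq\mathcal Q$, so (A1) gives $\Psi(X\vert P)=\Psi(X\vert\{P\})\le\Psi(X\vert\mathcal Q)$; taking the supremum over $P\in\mathcal Q$ yields $S(X\vert\mathcal Q)\le\Psi(X\vert\mathcal Q)$. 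Combining the two inequalities gives the representation. I expect the main obstacle to be the upper bound $\Psi(X\vert\mathcal Q)\le S(X\vert\mathcal Q)$ in part (i), since there (A3) is not assumed and it must be recovered from (A2). The plan is to use (A2) as a comparison device: for any fixed $X$ and $\mathcal Q$, I would seek an auxiliary loss (or a constant, exploiting standardness) whose single-scenario values under every $P\in\mathcal Q$ dominate those of $X$, so that (A2) transfers the dominance to the $\mathcal Q$-level evaluation and pins $\Psi(X\vert\mathcal Q)$ below $S(X\vert\mathcal Q)$. The delicate point is that $S(X\vert\mathcal Q)$ need not itself be attained or be expressible as $\Psi(Y\vert P)$ for a single $Y$, so I anticipate approximating $S(X\vert\mathcal Q)$ from above by constants $c>S(X\vert\mathcal Q)$, using standardness to write $\Psi(c\vert P)=c\ge\Psi(X\vert P)$ for all $P\in\mathcal Q$, invoking (A2) to conclude $\Psi(X\vert\mathcal Q)\le\Psi(c\vert\mathcal Q)=c$, and then letting $c\downarrow S(X\vert\mathcal Q)$.

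One technical caveat I would keep in mind throughout: the theorem restricts the range of $\Psi$ to $\R$, so the suprema and the constants $c$ are all finite, and the limiting argument $c\downarrow S(X\vert\mathcal Q)$ poses no issue with infinities. I would verify at the outset that $S(X\vert\mathcal Q)$ is finite whenever $\Psi$ is $\R$-valued and $\mathcal Q$ is nonempty, and briefly address the degenerate case $\mathcal Q=\varnothing$ separately if the convention for the empty supremum requires it. With these pieces assembled, both equivalences follow.
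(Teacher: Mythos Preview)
Your proposal is correct and follows essentially the same route as the paper's proof: use (A1) on singletons for the lower bound, and for part (i) compare $X$ with a constant via standardness and (A2) to get the upper bound. The only difference is that your limiting step $c\downarrow S(X\vert\mathcal Q)$ is unnecessary---since (A2) requires only the non-strict inequality $\Psi(X\vert P)\le\Psi(Y\vert P)$, you can take $c=S(X\vert\mathcal Q)$ directly (standardness gives $\Psi(c\vert P)=c\ge\Psi(X\vert P)$ for every $P\in\mathcal Q$), which is exactly what the paper does with its $s_X$.
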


Using Theorem \ref{th:worstcase}, we can  pin down the forms of possible generalized risk measures with properties on the simpler object $\Psi(X\vert P)$ for $X\in \X$ and $P\in \mathcal P$.  Theorem \ref{th:worstcase} is a general functional form of the specific preferential characterization treated in Theorem 2 of \cite{CHMM21}. 
\begin{definition}
	For a given generalized risk measure $\Psi$, the single-scenario risk measure $(X,P)\mapsto \Psi(X\vert P)$, $X\in \X$, $P\in \mathcal P$, is called the \emph{core} of  $\Psi$.   
\end{definition}
By Theorem \ref{th:worstcase}, the cores  one-to-one correspond  to standard generalized risk measures that satisfy (A1)-(A2) via \eqref{eq:def_unc}. Note that in general,  the core of $\Psi$ does not determine  $\Psi$ on $\X\times 2^\P$, if the conditions in Theorem \ref{th:worstcase} are not satisfied. 

In case \eqref{eq:def_unc} holds, we say that the core $\Psi$ on $\X\times \mathcal P$ induces the generalized risk measure $\Psi$ on $\X\times 2^{\mathcal P}$. Many results in this paper are stated for cores instead of the generalized risk measure.  Nevertheless, when we speak of cores, we do not need to assume the worst-case form \eqref{eq:def_unc} or any of (A1)-(A3).

Some simple examples for the worst-case generalized risk measures are collected below, and they appear in  forms similar to those in the classic theory of risk measures. 
\begin{example}\label{ex:worst}
	\begin{enumerate}[(i)]
		\item  The expectation core 
		$$\Psi(X\vert P)= \E ^P [X],~~~~(X,P)\in \X\times \mathcal P$$
		induces the generalized risk measure 
		$$
		\Psi(X\vert \mathcal Q)= \sup_{P\in \mathcal Q} \E^P [X] ,~~~~(X, \mathcal Q)\in \X\times 2^\P.
		$$
		For a fixed $\mathcal Q$, $\Psi(\cdot\vert \mathcal Q)$ is the robust representation of a traditional coherent risk measure of \cite{ADEH99}. This class of risk measures is the most well studied in the literature, and we will pay special attention to it in Section \ref{sec:connect}.
		
		\item 	Let $\gamma:\mathcal P\to \R$ be a non-constant function on $\mathcal P$. The  penalized-mean core
		$$\Psi(X\vert P)= \E ^P [X] - \gamma(P),~~~~(X,P)\in \X\times \mathcal P$$
		induces the generalized risk measure $$
		\Psi(X\vert \mathcal Q)= \sup_{P\in \mathcal Q} \{ \E^P [X] - \gamma(P)\},~~~~(X, \mathcal Q)\in \X\times 2^\P. 
		$$
		For a fixed $\mathcal Q$, $\Psi(\cdot\vert \mathcal Q)$ is  the robust representation of a traditional convex risk measure of \cite{FS16}. 
		
		\item  For $\alpha \in (0,1)$, the VaR core $(X,P)\mapsto \VaR_\alpha (X\vert P) $
		induces the worst-case VaR in Section \ref{sec:23}. 
		\item For $\alpha \in (0,1)$, the ES  core  $(X,P)\mapsto \ES_\alpha (X\vert P) $ 
		induces the worst-case ES in Section \ref{sec:23}.
	\end{enumerate}
\end{example}

\section{Three formulations of law invariance} \label{sec:law invariance}

For a given $P$, the functional $X\mapsto \Psi(X\vert P)$ is a traditional risk measure, and properties can be imposed for this traditional risk measure. 
The more interesting and non-trivial question is the interplay between $X$ and $P$ for the core $\Psi$, which we will address below. 
Since $P\in \mathcal P$ is interpreted as a scenario for us to generate a statistical model for the loss $X$, the evaluation of the risk should depend on the distribution of $X$.
Motivated by this consideration, we can consider three forms of law invariance on the generalized risk measure $\Psi$ or its core:\footnote{In this paper, all properties (Ax)  reflect how $\Psi$ reacts to $\mathcal Q$, all properties (Bx) reflect how $\Psi$ reacts to  the distributions of the risk, all properties (Cx) reflect consideration for $\Psi$ in terms of a traditional risk measure, all properties (Dx) are relevant to a mapping defined on the set of measures $\P$, and all properties (Ex) reflect consideration on decision-theoretic preference. }

\begin{enumerate}[(B1)]
	\item[(B1)] Strong law invariance:  
	$\Psi(X\lvert P ) = \Psi(Y\lvert Q)$ for   $X,Y \in \X$ and 
	$P, Q \in \mathcal P$ with $X\vert _P \laweq Y  \vert  _Q$. 
	\item[(B2)] Loss law invariance:  $\Psi(X\lvert P ) = \Psi(Y\lvert P)$ for   $X,Y \in \X$ and 
	$P  \in \mathcal P$  with  $X\vert _P \laweq Y  \vert  _P$. 
	\item[(B3)] Scenario law invariance:  $\Psi(X\lvert P ) = \Psi(X\lvert Q)$ for   $X \in   \X$ and 
	$P,Q  \in \mathcal P$  with  $X\vert _P \laweq X  \vert  _Q$. 
\end{enumerate}

Clearly, (B1) is stronger than both (B2) and (B3).
Each of (B1), (B2) and (B3) reflects the consideration that the probability measures $P$ in $\Psi(X\vert P)$ is used to model the distribution of the loss $X$. 
More precisely, (B1) is an agreement of risk assessment for the same distribution across different scenarios and different losses, whereas (B2) only yields the agreement for each particular scenario,
and (B3) only yields the agreement for each particular loss. 
The following example shows that (B1), (B2) and (B3) are genuinely different concepts.  
\begin{example}\label{ex:3}
	
	\begin{enumerate}[(i)]
		\item  The cores in Example \ref{ex:worst} (i), (iii) and (iv)  
		are strongly law-invariant.
		\item The core in Example \ref{ex:worst} (ii)
		is loss law-invariant, but generally not scenario law-invariant.  
		\item Let $\beta:\mathcal X\to \R$ be a non-constant function on $\mathcal X$. The core 
		$$\Psi(X \vert P)=\E ^P [X] - \beta(X),~~~~(X,P)\in \X\times \mathcal P$$
		is scenario law-invariant, but generally not loss law-invariant.
	\end{enumerate}
\end{example}

Since (B1) implies both  (B2) and (B3), one may wonder whether (B2) and (B3) jointly imply (B1), which turns out to be a tricky question. 
In other words, we aim to show from (B2) and (B3) that  $\Psi(X\vert P ) = \Psi (Y\vert Q)$ holds for $P,Q\in \P$ and $X,Y\in \X$ satisfying $X\vert _P \laweq Y  \vert  _Q$.
Denote by $F$ the distribution of $X$ under $P$, which is the same as that of $Y$ under $Q$. 
If there exists $Z\in \X$ which has distribution $F$ under both $P$ and $Q$, 
then we have the desired chain of equalities 
$\Psi(X\vert P) = \Psi(Z\vert P) = \Psi(Z\vert Q) = \Psi(Y\vert Q) $.
Unfortunately, the existence of such $Z$ depends on the specification of $P,Q$ and it cannot be expected in general; this problem is non-trivial and has been studied in detail by \cite{SSWW19}. 
In the result below, we show that, under the extra assumption that the measurable space $(\Omega,\mathcal F)$ is standard Borel (i.e., isomorphic to the Borel space on $[0,1]$),   
it is possible to find an intermediate measure $R$ and two random variables $Z,W\in \X$ such that the chain of equalities
$$
X\vert _P \laweq Z\vert_ P \laweq Z\vert_ R \laweq W\vert_ R \laweq W \vert_ Q \laweq Y  \vert  _Q
$$ holds, and it 
gives 
the desired statement $\Psi(X\vert P) =\Psi(Y\vert Q)$ needed for  (B1).  


\begin{theorem}\label{thm:law-inv}
	For a core $\Psi$, (B1) implies both (B2) and (B3).
	If $(\Omega,\mathcal F)$ is standard Borel, then  (B2) and (B3) together are equivalent to (B1).
\end{theorem}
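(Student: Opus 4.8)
\emph{The easy direction and the reduction.} Both (B2) and (B3) are mere instances of (B1): taking $Q=P$ in (B1) gives (B2), and taking $Y=X$ gives (B3), so only the converse needs work. For it, fix $P,Q\in\P$ and $X,Y\in\X$ with $X\vert_P\laweq Y\vert_Q$, and write $F\in\M$ for this common law. The plan is to produce an intermediate scenario $R\in\P$ and losses $Z,W\in\X$ realising the chain displayed in the excerpt,
\begin{equation*}
X\vert_P\laweq Z\vert_P\laweq Z\vert_R\laweq W\vert_R\laweq W\vert_Q\laweq Y\vert_Q,
\end{equation*}
where $Z$ has law $F$ under both $P$ and $R$, and $W$ has law $F$ under both $R$ and $Q$. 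Granting this, (B2) (same scenario, equal laws) applies to the first, third and fifth identifications and (B3) (same loss, equal laws) to the second and fourth, giving $\Psi(X\vert P)=\Psi(Z\vert P)=\Psi(Z\vert R)=\Psi(W\vert R)=\Psi(W\vert Q)=\Psi(Y\vert Q)$, which is exactly (B1).

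\emph{Construction of $R$, $Z$, $W$.} The key idea I would use is to take $R$ \emph{mutually singular} to both $P$ and $Q$; then a common-$F$-distributed random variable exists on each leg essentially for free, which sidesteps the general compatibility obstruction for a single pair $(P,Q)$ analysed in \cite{SSWW19}. To build such an $R$, I would invoke the standard Borel hypothesis and work on $[0,1]$ via the Borel isomorphism. Setting $\mu=(P+Q)/2$, an atomless finite Borel measure, one finds a perfect (Cantor-type) set $C$ with $\mu(C)=0$; equipping $C$ with an atomless probability measure and transporting it back yields an atomless $R\in\P$ supported on a $\mu$-null set, so $R\perp P$ and $R\perp Q$. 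For the legs: since $R\perp P$, pick $E\in\F$ with $P(E)=1$ and $R(E)=0$; as $P\vert_E$ and $R\vert_{E^c}$ are atomless, define $Z$ by a quantile transform producing law $F$ under $P\vert_E$ on $E$ and under $R\vert_{E^c}$ on $E^c$, so that $Z\vert_P\laweq F\laweq Z\vert_R$. Constructing $W$ for the pair $(R,Q)$ identically completes the data. Because $F\in\M$ is compactly supported, $Z$ and $W$ are bounded, hence lie in $\X$.

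\emph{Main obstacle.} The crux is the existence of the singular intermediate $R$, and the recognition that mutual singularity trivialises the compatibility on each leg: a single random variable with law $F$ under both $P$ and $Q$ need not exist, which is precisely the delicate phenomenon of \cite{SSWW19}. The standard Borel assumption is what guarantees that the ambient space is rich enough to carry a $(P+Q)$-null perfect set supporting an atomless measure; I expect this to be the only place the hypothesis enters, and its failure is exactly what can break the equivalence of (B1) with the pair (B2), (B3) in general.
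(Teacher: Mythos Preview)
Your proof is correct and follows the same overall strategy as the paper: construct an atomless intermediate scenario $R$ mutually singular to both $P$ and $Q$, splice on each leg to obtain $Z$ and $W$ with the required common law, and then apply (B2) and (B3) alternately along the chain. The two proofs differ only in how the two technical steps are implemented. The paper builds $R$ as a regular conditional probability $R(\cdot)=P'(\cdot\mid U=x)$ for $P'=(P+Q)/2$ and a $P'$-uniform variable $U$ (this is where the standard Borel hypothesis enters for them), and then invokes \cite{SSWW19} for the existence of $Z$ and $W$. You instead use the Borel isomorphism with $[0,1]$ to exhibit a $(P+Q)$-null Cantor-type set carrying an atomless $R$, and carry out the quantile-transform construction of $Z$ and $W$ by hand on the two halves of a separating set. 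Your route is somewhat more elementary and self-contained (no regular conditional probabilities, no external citation), at the modest cost of needing to justify that every atomless Borel probability on $[0,1]$ admits a null perfect set; this is true via a standard Cantor-scheme argument, but it deserves a line in a full write-up.
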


\begin{remark}
	It remains unclear whether the equivalence (B2+B3)$\Leftrightarrow$(B1) holds if $(\Omega,\mathcal F)$ is not standard Borel. 
	For applications in finance and risk management, it is typically sufficient to use a standard Borel space, because one can construct countably many independent
	Brownian motions on the corresponding probability space. The assumption of a standard Borel space is used in some classic literature on risk measures, e.g., \cite{D02} and \cite{JST06}. 
\end{remark}

Loss law invariance (B2) seems to be always desirable to assume in practice, because if two random losses $X$ and $Y$ share the same distribution under a chosen scenario $P$ of interest, then it is natural to assign the same risk value to these two losses.  For a fixed collection $\mathcal Q\in 2^{\mathcal P}$, this property  defines a $\Q$-based risk measure of \cite{WZ18}.
On the other hand, it may not always be desirable to assume (B3); although two scenarios may give the same distribution of a loss $X$, the riskiness may not be understood as the same, as illustrated by the following example. 

\begin{example}[(B3) is not always desirable] 
	Let $P$ represent a good economic scenario, and $Q$ represent an adverse economic scenario (e.g., COVID-19).
	Assume that the distribution of $X$ is the same under $P$ and $Q$, which means that $X$ is independent of the particular economic factor which generates $P$ and $Q$.
	The value $\Psi(X\vert P)$ quantifies the riskiness of $X$ when $P$ is the chosen scenario, and 
	$\Psi(X\vert Q)$ quantifies the riskiness of $X$ when $Q$ is the chosen scenario. 
	Since $P$ is a better economy, the risk manager may think that $X$ is more acceptable in this situation, leading to $\Psi(X\vert P)<\Psi(X\vert Q)$. 
	For instance, the core in  Example \ref{ex:worst} (ii), the robust representation of convex risk measures, reflects this consideration, and it is not scenario law-invariant.
\end{example}

Next, we collect  some  representation results based on (A1), (A3) and (B1)-(B3). 
	Recall that $\M$ is the set of compactly supported distributions on $\mathbb{R}$. 
Throughout, we define 
$$\Sigma = \{ \psi:\M\to [-\infty,\infty]\}.$$ Each mapping  $\psi \in \Sigma$ represents a traditional law-invariant risk measure  treated as a  functional on $\M$ instead of on $\X$.
\begin{proposition}\label{th:law invariance}
	Let $\Psi$ be a generalized risk measure. 
	\begin{enumerate}[(i)]
		\item $\Psi $ satisfies (A1), (A3) and (B1) if and only if there exists $\psi\in \Sigma$ such that
		$$
		\Psi(X\vert \mathcal Q) = \sup_{P\in \mathcal Q}  \psi(F_{X|P}),~~~~(X, \mathcal Q)\in \X\times 2^\P.
		$$ 
		\item   $\Psi$ satisfies (A1), (A3) and (B2) if and only if there exists   $ \{\psi_P: P\in \mathcal P \} \subseteq   \Sigma$ such that
		$$
		\Psi(X\vert \mathcal Q) = \sup_{P\in \mathcal Q} \psi_P  (F_{X|P}),~~~~(X, \mathcal Q)\in \X\times 2^\P.
		$$ 
		\item $\Psi$ satisfies (A1), (A3) and (B3) if and only if there exists  $ \{\psi_X: X\in \mathcal X \} \subseteq   \Sigma$ such that
		$$
		\Psi(X\vert \mathcal Q) = \sup_{P\in \mathcal Q}\psi_X (F_{X|P}),~~~~(X, \mathcal Q)\in \X\times 2^\P.
		$$ 
	\end{enumerate}
\end{proposition}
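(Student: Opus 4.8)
The plan is to treat all three parts uniformly, using Theorem \ref{th:worstcase}(ii) as the backbone. Since each part assumes (A1) and (A3), that theorem tells us $\Psi$ admits the worst-case representation $\Psi(X\vert\mathcal Q)=\sup_{P\in\mathcal Q}\Psi(X\vert P)$ in terms of its core, and conversely any $\Psi$ of this form satisfies (A1) and (A3). Thus each equivalence reduces to understanding precisely how the chosen law-invariance axiom (B1), (B2), or (B3) constrains the core $(X,P)\mapsto\Psi(X\vert P)$, after which the supremum representation lifts the conclusion to all of $\X\times 2^\P$.

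For the ``only if'' direction of part (i), I would first invoke Theorem \ref{th:worstcase}(ii) to obtain the worst-case form. Then (B1) says exactly that the value $\Psi(X\vert P)$ depends on the pair $(X,P)$ only through the distribution $F_{X|P}$: whenever $X\vert_P \laweq Y\vert_Q$ the two cores agree. This lets me define $\psi(F):=\Psi(X\vert P)$ for any representative pair with $F_{X|P}=F$, and (B1) is precisely the statement that this is well-defined. Because $\mathcal P$ consists of atomless measures, every $F\in\M$ is realized as some $F_{X|P}$ (take $X=F^{-1}(U)$ for a $P$-uniform $U$, which is bounded since $F$ is compactly supported), so $\psi$ is defined on all of $\M$ and hence $\psi\in\Sigma$; substituting $\psi(F_{X|P})=\Psi(X\vert P)$ into the worst-case form yields the claimed expression. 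Parts (ii) and (iii) are identical in spirit, the only change being the scope of the distributional dependence: (B2) fixes $P$ and says $\Psi(\cdot\vert P)$ factors through $F_{\cdot|P}$, so I define one function $\psi_P$ per scenario on $\{F_{X|P}:X\in\X\}$ (extending arbitrarily to the rest of $\M$); (B3) fixes $X$ and says $\Psi(X\vert\cdot)$ factors through $F_{X|\cdot}$, so I define one function $\psi_X$ per loss on $\{F_{X|P}:P\in\mathcal P\}$. In each case the relevant axiom is exactly the well-definedness condition for the function(s) being introduced.

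For the ``if'' directions, I would verify the three properties directly from the stated representation, without needing Theorem \ref{th:worstcase}. Writing $\Psi(X\vert\mathcal Q)=\sup_{P\in\mathcal Q}\psi(F_{X|P})$ (and analogously with $\psi_P$ or $\psi_X$), monotonicity of the supremum in its index set gives (A1); taking $\mathcal Q=\{P\}$ identifies the core as $\psi(F_{X|P})$, so the representation reads $\Psi(X\vert\mathcal Q)=\sup_{P\in\mathcal Q}\Psi(X\vert P)$, which gives (A3) (indeed with equality); and the fact that the core factors through the distribution in the appropriate way returns the corresponding law-invariance axiom.

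The only points requiring care — and the closest thing to an obstacle — are bookkeeping: checking that each $\psi$, $\psi_P$, or $\psi_X$ is well-defined on the range of distributions that actually occur, which is precisely the content of (B1), (B2), (B3) respectively, and then observing that extension to all of $\M$ is free because $\Sigma$ imposes no constraints on its members. For part (i) I would additionally confirm the realizability claim that every compactly supported $F$ arises as some $F_{X|P}$, which is where atomlessness of the scenarios in $\mathcal P$ is used; for parts (ii) and (iii) this realizability is unnecessary, since the extension beyond the occurring distributions is arbitrary.
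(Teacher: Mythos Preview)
Your proposal is correct and follows essentially the same approach as the paper: invoke Theorem~\ref{th:worstcase}(ii) to reduce to the core, then use the relevant law-invariance axiom as the well-definedness condition for $\psi$, $\psi_P$, or $\psi_X$, with atomlessness supplying the realizability of every $F\in\M$ in part~(i). Your write-up is in fact more explicit than the paper's on the ``if'' direction and on the extension-to-$\M$ bookkeeping; one minor remark is that in part~(ii) no extension is actually needed either, since for each fixed atomless $P$ the map $X\mapsto F_{X|P}$ is already onto $\M$.
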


%

\section{Coherent  generalized risk measures}\label{sec:connect}

In this section, we pay special attention to the most important class of traditional risk measures: coherent risk measures of \cite{ADEH99}. We first provide a characterization for a generalized risk measure to have the form of coherent risk measures in Example \ref{ex:worst}, and then we discuss a few additional properties that are specific to our setting. 

\subsection{A characterization for coherent risk measures}

We give a simple characterization of the coherent risk measures in Example \ref{ex:worst} (i). 
Coherent risk measures including ES (for a fixed scenario) are the most well-studied class of risk measures in the finance and engineering literature. 
We first list some properties of traditional risk measures of \cite{ADEH99} and \cite{FS16}.   These properties are formulated for the traditional risk measure $X\mapsto \Psi(X\vert \mathcal Q)$  on $\X$ for each fixed $\mathcal Q\subseteq \mathcal P$, and we denote this by $\Psi_{\mathcal Q}$.
\begin{enumerate}[(C1)]
	\item[(C1)]  Monotonicity:  $  \Psi_{\mathcal Q}  (X) \le \Psi_{\mathcal Q}  (Y) $ for all $X,Y \in \mathcal X$ with $X \le Y$;  
	\item[(C2)]   Cash-additivity:   $\Psi_{\mathcal Q} (X+m) =  \Psi_{\mathcal Q} (X)+m$ for all $X \in \mathcal X$ and  $m \in \mathbb R$;  
	\item[(C3)]  Positive homogeneity:  $\Psi_{\mathcal Q}  (\lambda  X) = \lambda   \Psi_{\mathcal Q}  (X)$ for all $\lambda > 0$ and $X \in \mathcal X$; 
	\item[(C4)]  Subadditivity:  $\Psi_{\mathcal Q}  (X + Y ) \leq \Psi_{\mathcal Q}(X) + \Psi_{\mathcal Q}  (Y) $ for all  $X,Y \in \mathcal X$.\end{enumerate}
Following the terminology for traditional risk measures, a generalized risk measure $\Psi$ is \emph{monetary} if it satisfies (C1)-(C2) and \emph{coherent} if it satisfies (C1)-(C4).
We further state a strong property   imposed on the cores. 
\begin{enumerate}[(C1)]
	\item[(C0)] Additivity of the core: $\Psi(X+Y\vert P) = \Psi(X\vert P) + \Psi(Y\vert P)$  for all  $X,Y \in \mathcal X$ and $P\in \mathcal P$. 
\end{enumerate} 
The property (C0) will be a key property to pin down the form of coherent traditional risk measures.

\begin{theorem}\label{th:coh}
	A standard generalized risk measure $\Psi$ satisfies (A1), (A2), (B2), (C1) and (C0) if and only if it is uniquely given by \begin{align}\label{eq:coh}
		\Psi(X\vert \mathcal Q)= \sup_{P\in \mathcal Q} \E^P [X] ,~~~~(X, \mathcal Q)\in \X\times 2^\P.
	\end{align}
	Moreover, $\Psi$ in \eqref{eq:coh} satisfies (C2)-(C4).
\end{theorem}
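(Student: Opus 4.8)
The plan is to prove the nontrivial (``only if'') direction by first collapsing the problem to the core and then identifying the core as the expectation. Since $\Psi$ is standard and satisfies (A1)--(A2), Theorem \ref{th:worstcase}(i) yields the worst-case representation
$$
\Psi(X\vert \mathcal Q) = \sup_{P\in \mathcal Q} \Psi(X\vert P),\qquad (X,\mathcal Q)\in \X\times 2^\P,
$$
so it suffices to show that the core satisfies $\Psi(X\vert P)=\E^P[X]$ for every $X\in\X$ and every $P\in\mathcal P$. Fix $P$ and write $\rho(\cdot)=\Psi(\cdot\vert P)$. From (C1) with $\mathcal Q=\{P\}$, $\rho$ is monotone; from (C0), $\rho$ is additive on $\X$; from (B2), $\rho$ is law-invariant under $P$; and standardization gives $\rho(s)=s$ for constants $s\in\R$.

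Next I would upgrade additivity to full $\R$-linearity. Additivity forces $\rho(0)=0$ and $\rho(-X)=-\rho(X)$, hence $\rho(qX)=q\rho(X)$ for all rationals $q$; standardization together with additivity gives cash-additivity $\rho(X+s)=\rho(X)+s$. To obtain homogeneity for real $\lambda>0$, I would sandwich $\lambda$ between rationals and use monotonicity: for $X\ge 0$ and rationals $q_1<\lambda<q_2$ one has $q_1X\le\lambda X\le q_2X$, so $q_1\rho(X)\le\rho(\lambda X)\le q_2\rho(X)$, and letting $q_1,q_2\to\lambda$ yields $\rho(\lambda X)=\lambda\rho(X)$; the sign restriction $X\ge0$ is removed by shifting $X$ by a large constant and invoking cash-additivity. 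Combined with additivity and $\rho(-X)=-\rho(X)$, this shows $\rho$ is linear; monotonicity and cash-additivity further give $|\rho(X)|\le\|X\|_\infty$, so $\rho$ is $1$-Lipschitz in the supremum norm.

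The decisive step is to exploit law invariance to force $\rho=\E^P$. For an event $A$ with $P(A)=p$, atomlessness of $P$ makes all such indicators $\one_A$ identically distributed, so by (B2) the quantity $g(p):=\rho(\one_A)$ depends only on $p$. Additivity over disjoint events (again using atomlessness to realize arbitrary disjoint masses) gives the Cauchy equation $g(p+q)=g(p)+g(q)$ on $[0,1]$, and monotonicity makes $g$ increasing; a monotone solution of Cauchy's equation is linear, and $g(1)=\rho(1)=1$ pins down $g(p)=p=\E^P[\one_A]$. Linearity then extends $\rho(X)=\E^P[X]$ to simple functions, and $1$-Lipschitz continuity extends it to all of $\X$ by uniform approximation. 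For the converse direction, I would simply verify that $(X,\mathcal Q)\mapsto\sup_{P\in\mathcal Q}\E^P[X]$ satisfies (A1)--(A2), (B2), (C1), (C0) as well as (C2)--(C4), each being immediate from monotonicity of the supremum and the linearity of expectation, with subadditivity following because a supremum of linear functionals is sublinear.

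I expect the main obstacle to be the homogeneity upgrade and the law-invariance argument, both of which hinge essentially on the atomlessness of $P$: atomlessness is what allows indicators of equal probability to be identically distributed and disjoint events of prescribed masses to be constructed, without which the Cauchy-equation reduction would break down.
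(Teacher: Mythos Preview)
Your proposal is correct and follows essentially the same route as the paper: first invoke Theorem~\ref{th:worstcase}(i) to reduce to the core, then show that a standard, monotone, additive, law-invariant functional on $\X$ under an atomless $P$ must be the expectation. The paper simply cites this last step as a ``well-known result'' (pointing to the proof of Lemma A.1 in \cite{WZ21}), whereas you spell it out explicitly via rational-to-real homogeneity, the Cauchy equation for $g(p)=\rho(\id_A)$, and a density argument; your explicit argument is precisely the content of the cited result.
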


The most important property used in Theorem \ref{th:coh} is the additivity of the core (C0), which may be seen as quite strong. As a primary example of coherent risk measures of the form \eqref{eq:coh} in financial practice,  the   Chicago Mercantile Exchange (CME) uses \eqref{eq:coh} to determine margin requirements for a portfolio of instruments; see Section 2.3 of \cite{MFE15}. In the CME approach, under each fixed scenario, the risk factors move in a particular deterministic way, and hence the portfolio loss assessment is additive; thus (C0) is natural in this context.

\subsection{Ambiguity sensitivity and comonotonic additive risk measures}
As we have seen from Example \ref{ex:3}, strong law invariance (B1) is genuinely stronger than the weaker notions of (B2) and (B3). 
In the following result, we connect weak and strong law invariance via an additional property on the generalized risk measure $\Psi$ via its core.

\begin{enumerate}[(B1)] 
	\item[(B4)] Ambiguity sensitivity: 
	For $X\in \X$, $P,Q\in \mathcal P$ and $\lambda \in [0,1]$, 
	$\Psi(X \lvert \lambda P + (1-\lambda)Q ) \ge \lambda \Psi(X \lvert P ) + (1-\lambda) \Psi(X\lvert Q)$.
	Moreover,
	$\Psi(\id_A\lvert \lambda P + (1-\lambda)Q ) = \lambda \Psi(\id_A \lvert P ) + (1-\lambda) \Psi(\id_A\lvert Q)$ for all $ A\in \mathcal F$ such that $P(A)=Q(A)$. 
\end{enumerate} 
The first statement of (B4) intuitively means that, due to ambiguity on the distribution of $X$, the risk of $X$ under a mixture is larger than the mixture of its risks under $P$ and $Q$; this is the concavity in $P$ in Proposition \ref{prop:ES}. For instance, for a random variable $X$ which is constant under both $P$ and $Q$, it may be random (Bernoulli) under $\lambda P + (1-\lambda)Q $, and hence its risk should be larger under the mixture than under the individual scenarios. 
Regarding the second statement of (B4), if the probability measures $P$ and $Q$ agree on how likely event $A$ is,
then there is no ambiguity on $A$, and its risk under a mixture should be simply a mixture of its risks under $P$ and $Q$.  
Another explanation may be illustrated by the following example. 
\begin{example}[Ambiguity sensitivity]
	Assume that $P$ is used by a risk analyst and $Q$ is used by another risk analyst.
	The manager would like to use $\lambda P + (1-\lambda)Q$, a mixture of $P$ and $Q$, to reflect the knowledge of both analysts. 
	For simplicity, the random loss $X$ is assumed to be an indicator of a loss event $A$.
	If $P$ and $Q$ give different assessments of the probability of $A$, then the manager would be worried about the discrepancy in the models, and her final risk assessment $\Psi(X\vert \lambda P + (1-\lambda)Q )$ 
	is more than $\lambda \Psi(X\vert P) + (1-\lambda)\Psi(X\vert Q)$, the weighted average of the two analysts' assessments. On the other hand, if $P$ and $Q$ give the same probability of $A$, then there is no disagreement in predicting $A$. In this case, her risk assessment of $\id_A$ is the same as the weighted average of the two analysts' assessments.
\end{example}

Another property that is essential to our next characterization result is comonotonic additivity, which is intimately linked to Choquet integrals; see e.g., \cite{WWW20}.
\begin{enumerate}[(C1)]
	\item[(C5)]  Comonotonic additivity:  $\Psi_{\mathcal Q}(X+Y) = \Psi_{\mathcal Q}(X) +  \Psi_{\mathcal Q}(Y),$ for all  $X,Y \in \mathcal X $ which are comonotonic.\footnote{Two random variables $X$ and $Y$ are \emph{comonotonic} $(X(\omega) - X(\omega '))(Y(\omega) - Y(\omega ')) \geq 0$ for all $(\omega , \omega ') \in \Omega \times \Omega$.}
\end{enumerate}

The following result characterizes loss law-invariant risk measures with ambiguity sensitivity, 
which turns out to be equivalent to strongly law-invariant risk measures without this assumption. The proof of this result is quite technical and it relies on Lyapunov's convexity theorem, as well as a few characterization results on Choquet integrals in \cite{WWW20}. 
	It is important to note that when we say the core satisfies some properties (C1)-(C5), it means it satisfies these properties as traditional risk measures.

\begin{theorem}\label{th:new2}
	For a core $\Psi$, the following are equivalent:
	\begin{enumerate}[(i)]
		\item  $\Psi$ is loss law-invariant, ambiguity sensitive, monetary, and comonotonic additive, i.e., $\Psi$ satisfies (B2), (B4), (C1), (C2) and (C5).  
		\item  $\Psi$ is strongly law-invariant, coherent, and comonotonic additive, i.e., $\Psi$ satisfies (B1) and (C1)-(C5).
		\item There exists an increasing concave function $ h  :[0,1]\to [0,1]$ with $h(0)=0=1-h(1)$ such that 
		\begin{equation}
			\label{eq:th:new2:representation}
			\Psi (X \vert P) =  \int X \d (h \circ P),~~~~(X,P)\in \X\times \P.
		\end{equation}
	\end{enumerate} 
\end{theorem}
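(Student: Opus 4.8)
The plan is to establish the two equivalences (ii)$\Leftrightarrow$(iii) and (i)$\Leftrightarrow$(iii), which together yield the three-way statement. The workhorse for a single scenario is the classical characterization (from \cite{WWW20}) that a monetary, comonotonic additive, law-invariant functional on the bounded random variables of an atomless space is a distortion risk measure: for each atomless $P$ there is an increasing $g_P:[0,1]\to[0,1]$ with $g_P(0)=0$ and $g_P(1)=1$ such that $\Psi(X\vert P)=\int X \d (g_P\circ P)$, where the capacity is recovered through $g_P(P(A))=\Psi(\id_A\vert P)$, and loss law invariance (B2) is exactly what guarantees that $\Psi(\id_A\vert P)$ depends on $A$ only through $P(A)$. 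Under (i), (B2) gives this for each $P$; under (ii), strong law invariance (B1) gives even more, as I explain next.

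For (ii)$\Leftrightarrow$(iii): given (iii), concavity of $h$ makes the set function $A\mapsto h(P(A))$ submodular, so the Choquet integral is subadditive; combined with the positive homogeneity and cash-additivity intrinsic to Choquet integrals this yields coherence (C1)--(C4), comonotonic additivity (C5) is automatic, and (B1) holds because $\int X \d (h\circ P)$ depends on $(X,P)$ only through $F_{X\vert P}$. Conversely, given (ii), the single-scenario representation produces $g_P$, and applying (B1) to indicators of events with equal probability (which are equal in law) forces $g_P$ to be independent of $P$, say $g_P=h$; subadditivity (C4) then forces $h$ to be concave.

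For (iii)$\Rightarrow$(i): properties (B2), (C1), (C2) and (C5) follow as above, and the only nontrivial point is ambiguity sensitivity (B4). I would use the spectral representation of a concave distortion as a mixture of Expected Shortfalls, $\int X \d (h\circ P)=\int_{(0,1)}\ES_\alpha(X\vert P)\d m(\alpha)$, and invoke the concavity of $P\mapsto\ES_\alpha(X\vert P)$ from Proposition \ref{prop:ES} to obtain the concavity inequality in $P$; the equality clause of (B4) for events with $P(A)=Q(A)$ is immediate, since on such events $\Psi(\id_A\vert\cdot)=h(\cdot(A))$ is affine along the mixture once the value $(\cdot)(A)$ is pinned.

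The crux is (i)$\Rightarrow$(iii), where only the weak (B2) is available and both the independence of $g_P$ from $P$ and the concavity of the resulting $h$ must be wrung out of (B4). Here I would bring in Lyapunov's convexity theorem: for atomless $P,Q$ the range of the vector measure $(P,Q)$ is convex, so I can realize an event $A$ with any prescribed pair $(P(A),Q(A))$ in that range. Applying the equality clause of (B4) to an event with $P(A)=Q(A)=p$ (a diagonal point, always in the range) shows that $P\mapsto g_P(p)$ is midpoint-affine, hence affine, for each $p$. Then, feeding this affinity into the inequality clause of (B4) evaluated on indicators of events that realize off-diagonal pairs $(a,b)$, the concavity inequality reduces to a comparison of incremental ratios of $g_P$ and $g_Q$; since for $P\neq Q$ the range contains off-diagonal points on both sides of the diagonal together with $(0,0)$ and $(1,1)$, interior diagonal points lie in the interior of the range, so I can realize pairs on either side and conclude $g_P=g_Q=:h$. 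With independence in hand, the inequality clause for indicators collapses to $h(\lambda a+(1-\lambda)b)\ge\lambda h(a)+(1-\lambda)h(b)$, giving concavity. I expect the rigorous passage from the incremental-ratio inequalities to the identity $g_P=g_Q$, carried out without assuming differentiability of the $g_P$, to be the most delicate step.
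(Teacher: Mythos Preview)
Your plan is essentially the paper's: the same distortion-risk-measure machinery for (ii)$\Leftrightarrow$(iii), and Lyapunov plus the two clauses of (B4) for the hard direction (i)$\Rightarrow$(iii). Two refinements are worth noting.

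First, the paper reverses your order in (i)$\Rightarrow$(iii): it derives concavity of each $g_P$ \emph{before} showing $g_P=g_Q$. From the affinity $g_{\lambda P+(1-\lambda)Q}=\lambda g_P+(1-\lambda)g_Q$ (equality clause on diagonal events) and the inequality clause on off-diagonal indicators one gets the crossed increment bounds $g_Q(x)-g_Q(x-\delta)\ge g_P(x+\delta)-g_P(x)$ and the symmetric analogue; chaining these (shift and substitute) yields $g_Q(x)-g_Q(x-\delta)\ge g_Q(x+2\delta)-g_Q(x+\delta)$, hence concavity of each $g_P$ on $(0,1)$. Only then does concavity supply absolute continuity and a.e.\ differentiability, so that letting $\delta\downarrow 0$ gives $g_P'=g_Q'$ a.e., and Newton--Leibniz yields $g_P(1-)-g_P(x)=g_Q(1-)-g_Q(x)$ on $(0,1)$. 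Without concavity first, the passage from increment comparisons to $g_P=g_Q$ has no regularity to stand on.

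Second, there is a genuine missing step your sketch does not anticipate: even after Newton--Leibniz you only know $g_P$ and $g_Q$ agree on $(0,1)$ up to their one-sided limits at $1$; nothing so far rules out a jump of $g_P$ at $1$ (think of the essential-supremum versus essential-infimum distortions, which both satisfy the boundary constraints $g(0)=0$, $g(1)=1$). The paper closes this with a separate argument: for $X$ Bernoulli$(\lambda)$ under $P$, decompose $P=(1-\lambda)\,P(\cdot\mid X{=}0)+\lambda\,P(\cdot\mid X{=}1)$ into atomless conditional measures and apply the inequality clause of (B4); comparison with a Choquet-integral upper bound then forces $g_P(1-)=1$. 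A minor further point: your final line deducing concavity from ``$h(\lambda a+(1-\lambda)b)\ge\lambda h(a)+(1-\lambda)h(b)$'' presumes one can realize arbitrary $(a,b)$ via Lyapunov, which for a generic pair $(P,Q)$ is false; either restrict to mutually singular $P,Q$ so the range is the full square, or rely on the concavity already obtained from the increment chaining.
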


There has been an extensive debate in both academia and industry on whether subadditivity (C4) proposed by \cite{ADEH99} is a good criterion for risk measures used in regulatory practice, as (C4) is key property which distinguishes VaR and ES; see \cite{ELW18, ESW21} and the reference therein. 
By Theorem \ref{th:new2}, from the perspective of multiple models, we can obtain (C4) by using ambiguity sensitivity (B4). 
Hence, our framework and results offer a novel decision-theoretic reason to support coherent risk measures (in particular, ES over VaR) without directly assuming subadditivity (C4).


\section{Connection to decision theory}
\label{sec:connection}

In this section, we discuss the connection of our generalized risk measures to classic notions in decision theory, as model uncertainty has been dealt with extensively in the decision-theoretic literature, and traditional risk measures are intimately linked to decision preferences in various forms; see e.g., \cite{DK13}.
 We first present a list of decision-theoretic criteria as examples to our framework, followed by  characterization results of two classic notions: the multi-prior expected utility of \cite{GS89} and  the variational preferences of \cite{MMR06}. 

\subsection{Examples of generalized risk measures in decision theory} 
Our framework includes many criteria in decision theory as typical examples.
Although the considerations  of these criteria are different from our paper, the following examples show the generality of our framework. 
\begin{example}\label{ex:literature}
	\begin{enumerate}[(i)]	
		\item The multi-prior expected utility of \cite{GS89} has a numerical  representation 
		$$
		\Psi(X\vert \Q) = u^{-1}\left(\min_{P \in \mathcal Q} \E^P[u(X)]\right),
		$$
		where $u$ is a strictly increasing utility function. 
		\item The variational preference of \cite{MMR06} has a numerical representation
		$$
		\Psi(X\vert \Q) = \min_{P \in \mathcal Q} \left(\E^P[u(X)] - \gamma(P)\right),
		$$	where $u$ is a strictly increasing utility function and $\gamma: \mathcal P\to [-\infty,\infty)$ is a penalty function.  The multiplier preferences of \cite{HS01} correspond to a special choice of $\gamma$ which is the Kullback–Leibler divergence  from a reference scenario.
		\item Let $\mathcal Q\subseteq\mathcal P$ be pre-specified and $\mu$ be a probability measure on $\mathcal Q$.  The smooth ambiguity preference of \cite{KMM05} has  a numerical representation 
		$$
		\Psi(X\vert \Q) = \phi^{-1} \left(\int_{\mathcal Q} \phi \left(u^{-1}\left(\E^P[u(X)]\right)\right)\d \mu(P) \right),
		$$ 
		where  $u$ is a strictly increasing utility function  and $\phi$ is a strictly increasing function.
		Note that in this formulation, $\mu$ needs to be  specified with $\mathcal Q$, and hence it should be considered as an input of $\Psi$ in our framework; see Section \ref{sec:conclusion} for more discussion on this.
		\item The imprecise information preference of \cite{GHTV08}  has  a numerical representation  
		$$
		\Psi(X\vert \Q) = u^{-1}\left(\min_{P \in \phi(\mathcal Q)} \E^P[u(X)]\right),
		$$
		where $u$ is a strictly increasing utility function and $\phi$ is a selecting function (assumed to exist) reflecting the decision maker's attitude to imprecision.
		\item The model misspecification preference of \cite{CHMM21}  has  a numerical representation 
		$$
		\Psi(X\vert \Q) = \min_{P \in \P} \left\{ \E^P \left[ u(X) \right] + \min_{Q \in \Q} c(P, Q) \right\},
		$$
		where $u$ is a strictly increasing utility function and $c$ is a distance on the set of measures which penalizes the model misspecification.
	\end{enumerate}	
\end{example}

\subsection{Multi-prior expected utilities}\label{sec:decision} 
\cite{GS89} proposed the notion of \textit{multi-prior expected utility} in decision theory.  Motivated by the multi-prior expected utility, we consider  a preference on $\X\times \mathcal S$ which is represented by a total pre-order  $\preceq$, where $\mathcal S$ is the collection of all finite subsets of $\P$. For tractability, we consider $\mathcal S$, all the finite subsets of $\P$, instead of $2^\P$ in this subsection. The decision is to compare a risk and a set of scenarios with another risk and another set of scenarios. This setting was studied by \cite{GHTV08}. We denote by $\simeq$ the equivalence under this preference. As above, we use $(X, P)$ if the set of scenarios has only one element $P$. For decisions among $(X_1, \Q_1), (X_2, \Q_2) \in \X\times \mathcal S$, we propose the following axioms  similar to what we have seen so far in this paper, but defined for  preferences instead of generalized risk measures. 
\begin{enumerate}
	
	\item[(E1)] Strong law invariance: $(X, P) \simeq (Y, Q)$ for any $P, Q \in \P$ and $X, Y \in \X$ satisfying $X\vert _P \laweq Y  \vert  _Q$.
	\item[(E2)] Uncertainty aversion:   
	$(X, \mathcal Q) \preceq (X, \mathcal R) $ for any $X\in \X$ and $\mathcal R, \mathcal Q \in \mathcal S$ with $\mathcal R\subseteq \mathcal Q$.
	\item[(E3)] Uncertainty bound: for any $X \in \X$ and $\mathcal Q \in \mathcal S$, there exists some $P \in \Q$ such that $  (X, P) \preceq (X, \Q)$.  
	\item[(E4)] Independence: for any $P,Q \in \P$, any $X, Y \in \X$ satisfying  $X\vert _Q \laweq Y  \vert  _Q$, and any $\alpha \in (0,1)$ we have $(X, P) \preceq (Y, P) \Longleftrightarrow (X, \alpha P+(1-\alpha)Q) \preceq (Y, \alpha P+(1-\alpha)Q)$.
	\item[(E5)] Continuity: for any $P,Q,R \in \P$ and any $X \in \X$, if $(X, P) \preceq (X, Q)  \preceq (X, R)$, then there exists $\alpha \in [0,1]$ such that $ (X, \alpha P+(1-\alpha)R) \simeq (X, Q)$. 
	
\end{enumerate}


Proposition \ref{th:preference} illustrates a decision-theoretic characterization for the multi-prior expected utility. The proof is based on Theorem \ref{th:worstcase} and the classic result of \cite{NM44}.  
\begin{proposition}
	\label{th:preference}
	A preference $\preceq$ on $\X\times \mathcal S$ satisfies (E1)-(E5) if and only if
	it is a multi-prior expected utility, i.e., there exists a function $u: \R \to \R$ such that 
	\begin{equation}\label{eq:decision}
		(X_1, \Q_1) \preceq (X_2, \Q_2)~ \Longleftrightarrow ~\min_{P\in \mathcal Q_1}  \E^P[ u (X_1)] \leq \min_{P\in \mathcal Q_2} \E^P[ u (X_2)]. 
	\end{equation} 
\end{proposition}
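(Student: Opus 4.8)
The plan is to prove the two directions separately, with essentially all the work in necessity. The sufficiency direction is a direct verification: assuming the representation $V(X,\mathcal Q)=\min_{P\in\mathcal Q}\E^P[u(X)]$, axiom (E1) holds because $\E^P[u(X)]$ depends on $(X,P)$ only through $F_{X|P}$; (E2) follows from monotonicity of the minimum under set inclusion; (E3) holds because a finite $\mathcal Q$ attains its minimum at some $P^\ast$; (E4) follows from the affinity of $P\mapsto\E^P[u(X)]$ together with the fact that $\E^Q[u(X)]=\E^Q[u(Y)]$ whenever $X\vert_Q\laweq Y\vert_Q$; and (E5) follows from the same affinity and the intermediate value theorem. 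I now turn to necessity.

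First I would reduce the single-scenario preference to a preference on distributions. By (E1) the restriction of $\preceq$ to pairs with a single scenario descends to a well-defined total preorder $\preceq^\ast$ on $\M$, via $F_{X|P}\preceq^\ast F_{Y|Q}\Leftrightarrow(X,P)\preceq(Y,Q)$. The key observation is that mixing scenarios induces mixing of distributions: for fixed $X$, the law of $X$ under $\alpha P+(1-\alpha)Q$ is $\alpha F_{X|P}+(1-\alpha)F_{X|Q}$. Under this translation, (E4) becomes exactly the independence axiom for $\preceq^\ast$ (here I use that $X\vert_Q\laweq Y\vert_Q$ forces the two mixtures to share the common component $F_{X|Q}=F_{Y|Q}$), and (E5) becomes mixture continuity. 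Invoking the von Neumann--Morgenstern / mixture-space theorem of \cite{NM44}, I obtain an affine functional $U:\M\to\R$ representing $\preceq^\ast$; setting $u(x)=U(\delta_x)$ and extending affinity from finitely supported laws to all of $\M$ yields $U(F)=\int u \d F$, so that the single-scenario preference is represented by $V_0(X,P)=\E^P[u(X)]$.

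Next I would aggregate over finite scenario sets. Fix $(X,\mathcal Q)$. Axiom (E2), applied with the singleton $\{P\}$, gives $(X,\mathcal Q)\preceq(X,P)$ for every $P\in\mathcal Q$, while (E3) supplies some $P^\ast\in\mathcal Q$ with $(X,P^\ast)\preceq(X,\mathcal Q)$; together they force $(X,\mathcal Q)\simeq(X,P^\ast)$ with $P^\ast$ a least-preferred scenario, i.e.\ one minimizing $V_0(X,\cdot)$. This is precisely the preferential form of worst-case aggregation: reading a representation of $\preceq$ as a generalized risk measure $\Psi=-V$, axioms (E2) and (E3) play the roles of (A1) and (A3), so Theorem \ref{th:worstcase}(ii) delivers the sup/min structure. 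Combining this with the single-scenario representation gives $V(X,\mathcal Q)=V_0(X,P^\ast)=\min_{P\in\mathcal Q}\E^P[u(X)]$, and comparing two pairs through their least-preferred scenarios yields the claimed equivalence \eqref{eq:decision}.

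The main obstacle will be the realizability step underlying the passage to $\preceq^\ast$. To read (E4) and (E5) as the independence and continuity axioms for \emph{arbitrary} laws $F,G,H\in\M$, I must exhibit, on the fixed measurable space, random variables and atomless scenarios realizing the prescribed joint laws, for instance $X,Y\in\X$ and $P,Q\in\P$ with $F_{X|P}=F$, $F_{Y|P}=G$ and $F_{X|Q}=F_{Y|Q}=H$. That such realizations exist is exactly where atomlessness of the scenarios is used (the phenomenon illustrated at the start of Section~\ref{sec:framework}), and some care is needed to ensure the mixture-space hypotheses hold for all laws rather than only those arising from a single pre-chosen $X$. A secondary technicality is the passage from affinity on finite mixtures to the integral representation $U(F)=\int u \d F$ on all compactly supported laws, which requires a boundedness argument exploiting that $\X$ consists of bounded random variables.
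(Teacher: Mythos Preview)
Your proposal is correct and follows essentially the same route as the paper: reduce the single-scenario preference to a preorder $\preceq^\ast$ on $\M$ via (E1), translate (E4)--(E5) into the vNM independence and continuity axioms to obtain $u$, and then use (E2)--(E3) to collapse each $(X,\mathcal Q)$ to its worst single scenario $(X,P^\ast)$. The two obstacles you flag---realizability of prescribed laws under atomless scenarios and the passage from affinity to the integral form $U(F)=\int u\,\d F$---are points the paper's proof also glosses over, so your sketch is at least as complete as the original.
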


The strong law invariance (E1), which allows us to translate $\preceq$ to a preference on the set of  distributions on the real line,  is crucial to the representation result.
(E2) and (E3) are reasonable for uncertainty-averse decision makers, and they correspond to (A1) and (A3), respectively, in the framework of generalized risk measures. (E4) and (E5)  correspond to the independence axiom and the continuity axiom of \cite{NM44}, respectively.


\subsection{Robust generalized risk measures} 
\label{sec:robust} 

In addition to the worst-case generalized risk measure characterized in 
Theorem \ref{th:worstcase}, another popular form of risk measures involving multiple probability measures arises from the robust representation of convex risk measures as in Example \ref{ex:worst} (ii).
More precisely, a traditional convex risk measure $\rho$ of \cite{FS16} takes the form, for some $\mathcal Q \subseteq \P$,
\begin{equation}
	\label{eq:robust-rep}
	\rho(X) = \sup_{P\in \mathcal Q} \{ \E^P[X] - \gamma(P)\},~~~~ X   \in \X,
\end{equation}
where    $\gamma:\P\to (-\infty,\infty]$ is a penalty function. 
Moreover, the variational preference of \cite{MMR06} takes a similar form to \eqref{eq:robust-rep} with the mean $\E$ replaced by an expected utility;\footnote{In the setting of numerical representation of preferences, a negative sign needs to be applied to the generalized risk measures to transform it to a preference functional.} see Example \ref{ex:literature} (ii).
Inspired by \eqref{eq:robust-rep} and the variational preference of \cite{MMR06}, we consider generalized risk measures with the form, 	for some $\psi\in \Sigma$,
\begin{equation} \label{eq:robust-rep2}
	\Psi(X\vert \mathcal Q) = \sup_{P\in \mathcal Q} \{ \psi  (F_{X|P}) - \gamma(P)\},~~~~(X, \mathcal Q)\in \X\times 2^\P.
\end{equation}
Clearly, if $\psi$ is the mean functional, then \eqref{eq:robust-rep2} yields the traditional (convex) risk measure \eqref{eq:robust-rep} for a given $\mathcal Q$. 
The generalized risk measure in \eqref{eq:robust-rep2} is loss law-invariant (B2) but neither scenario law-invariant (B3) nor strongly law-invariant (B1). 
In order to characterize \eqref{eq:robust-rep2}, we further impose the following technical property, which says that the difference  between the values of the core evaluated on $P$ and $Q$ for identically distributed losses only depends on $P$ and $Q$ but not the random losses. 

\begin{enumerate} 
	
	\item[(B5)]  If $X\vert _P \laweq Y  \vert  _Q$
	and $Z\vert _P \laweq W  \vert  _Q$, then $\Psi(X\lvert P )- \Psi(Y\lvert Q) = \Psi(Z\lvert P )- \Psi(W\lvert Q).$
\end{enumerate} 

\begin{proposition}\label{th:robust}
	Let $\Psi$ be a generalized risk measure. $\Psi$ satisfies (A1), (A3), (B2) and (B5) if and only if there exist a penalty function $\gamma: \P \rightarrow \R$ 
	and some $\psi \in \Sigma$ such that the representation \eqref{eq:robust-rep2} holds.  
\end{proposition}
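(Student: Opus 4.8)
The plan is to prove the substantive (forward) direction by reducing the core $(X,P)\mapsto\Psi(X\vert P)$ to the separated form $\psi(F_{X\vert P})-\gamma(P)$, and then to obtain the converse by routine verification. First I would invoke Theorem~\ref{th:worstcase}(ii): since (A1) and (A3) hold, $\Psi$ admits the worst-case representation $\Psi(X\vert\mathcal Q)=\sup_{P\in\mathcal Q}\Psi(X\vert P)$, so everything reduces to understanding the core. Adding (B2) and appealing to Proposition~\ref{th:law invariance}(ii), I obtain a family $\{\psi_P:P\in\mathcal P\}\subseteq\Sigma$ with $\Psi(X\vert P)=\psi_P(F_{X\vert P})$; equivalently, for each fixed scenario the core depends on the loss only through its distribution.

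The heart of the argument is to use (B5) to separate the scenario dependence. Fix $P,Q\in\mathcal P$ and $F,G\in\M$. Because $P$ and $Q$ are atomless, I can realize $X,Y,Z,W\in\X$ with $F_{X\vert P}=F_{Y\vert Q}=F$ and $F_{Z\vert P}=F_{W\vert Q}=G$, so that $X\vert_P\laweq Y\vert_Q$ and $Z\vert_P\laweq W\vert_Q$. Feeding these into (B5) and rewriting via the core gives $\psi_P(F)-\psi_Q(F)=\psi_P(G)-\psi_Q(G)$; hence $\psi_P-\psi_Q$ is a constant $c(P,Q)$ not depending on the distribution. This $c$ is a cocycle ($c(P,Q)+c(Q,R)=c(P,R)$ and $c(P,P)=0$). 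Fixing a reference scenario $P_0$ and setting $\psi:=\psi_{P_0}$ and $\gamma(P):=c(P_0,P)$, I get $\psi_P(F)=\psi(F)-\gamma(P)$ for every $F$, and substituting into the worst-case form yields exactly the representation \eqref{eq:robust-rep2}.

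For the converse I would simply check the four axioms for $\Psi$ of the form \eqref{eq:robust-rep2}. Enlarging $\mathcal Q$ enlarges the supremum, giving (A1); the representation is itself a supremum of the core over $P\in\mathcal Q$, so (A3) holds (with equality); for fixed $P$ the core $\psi(F_{X\vert P})-\gamma(P)$ depends on $X$ only through $F_{X\vert P}$, giving (B2); and whenever $X\vert_P\laweq Y\vert_Q$ the two cores carry the same $\psi$-value, so $\Psi(X\vert P)-\Psi(Y\vert Q)=\gamma(Q)-\gamma(P)$ independently of the common distribution, which is precisely (B5).

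The main obstacle I anticipate is making $\gamma$ genuinely real-valued, as the statement requires $\gamma:\P\to\R$ while the core may a priori take values in $[-\infty,\infty]$; concretely one must ensure the constant $c(P_0,P)$ is finite. I would address this by choosing the reference $P_0$ so that $\psi_{P_0}$ attains a finite value and by reading finiteness of $\psi_P(F)-\psi_{P_0}(F)$ off (B5), whose differences are assumed well-defined. A secondary point needing care is the repeated use of atomlessness to realize an arbitrary $F\in\M$ simultaneously under $P$ and under $Q$, which is exactly what lets (B5) be applied to all pairs $(F,G)$ and thus what makes the separation argument valid.
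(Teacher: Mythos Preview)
Your proposal is correct and follows essentially the same route as the paper: reduce to the core via Theorem~\ref{th:worstcase}/Proposition~\ref{th:law invariance}, use (B5) to show $\psi_P-\psi_Q$ is constant in the distribution, then split off a reference scenario to define $\psi$ and $\gamma$, with the converse checked directly. Your choice of an arbitrary reference $P_0\in\mathcal P$ is in fact cleaner than the paper's choice of $\delta_0$, and the finiteness concern you flag for $\gamma$ is a genuine subtlety that the paper also leaves implicit in the formulation of (B5).
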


Property 
(B5) can be roughly interpreted as that the magnitude of penalization for a given scenario $P$ is independent of the risky position $X$ being evaluated. This property may be seen as a bit artificial. 
Our characterization in Proposition \ref{th:robust} 
is mainly motivated by the great popularity of the robust representation of convex risk measures and variational preferences, and we omit a detailed discussion of the economic  desirability or undesirability of (B5).

\section{Concluding remarks}\label{sec:conclusion}

The new framework of generalized risk measures introduced in this paper allows for a unified systemic formulation of measures of risk and uncertainty. 
Our results are only first attempts to understand the new setting, and many further questions arise, especially regarding the interplay between the risk variable $X$ and the uncertainty collection $\mathcal Q$ for a generalized risk measure. Both new economic and mathematical questions arise as the new functionals are more sophisticated than traditionally studied objects by definition. 

Worst-case generalized risk measures are characterized with a few axioms in Theorem \ref{th:worstcase}. 
Another popular way of handling model uncertainty is to use a weighted average of the risk evaluations. In the case of a finite collection $\mathcal Q$, we can always use an arithmetic average as the risk evaluation, that is, to generate $\Psi$ via its core by
$$
\Psi (X\vert \mathcal Q) = \frac 1{|\mathcal Q|}\sum_{Q\in \mathcal Q} \Psi(X\vert Q) .
$$
Certainly, such a formulation does not satisfy (A1) but it satisfies (A2) and (A3).
In general, to allow for different weights and infinite collections, 
one needs to associate each collection $\mathcal Q$ with a measure as in  \eqref{eq:avgES} or the smooth ambiguity preference of \cite{KMM05} in Example \ref{ex:literature} (iii).
Such a measure can either be pre-specified or treated as an input of $\Psi$, thus slightly extending our framework.

We studied several most popular properties such as law-invariance, coherence and comonotonic additivity, but many more properties on the new framework remain to be explored, as the literature on traditional risk measures is very rich. In particular, the desirability of   theoretic properties in risk management practice requires thorough study, as they may have  different interpretations from its traditional counterpart. For instance, additivity of the core may be sensible in our framework (Theorem \ref{th:coh}) and it nicely connects to the scenario-based margin calculation used by CME. However, such a property is not desirable for traditional risk measures, as it essentially forces the risk measure to collapse to the mean; see e.g., \cite{LM21} and  \cite{CGLL21}.
 
 Finally, we mention that in some formulations of generalized risk measures, not all choices of the input scenario $\mathcal Q$ are economically meaningful.
 In particular, for a given penalty function $\gamma$ on $\mathcal P$, 
 the core $ \Psi(X\vert P)= \E ^P [X] - \gamma(P)$ in Example \ref{ex:worst} (ii) or $ \Psi(X\vert P)= \E ^P [u(X)] - \gamma(P)$ in Example \ref{ex:literature} (ii) is not meant to be used directly with a single $P$; the use of $\gamma$ already implicitly implies that there is some level of model uncertainty, and it is supposed to be coupled with the worst-case operation. 
 The value $\Psi(X\vert P)$ for a standalone $P$ is thus difficult to interpret, and should not be used for decision making without properly specifying the uncertainty collection $\mathcal Q$.
On the other hand, such a situation does not happen in, for instance, the worst-case  or   average-type generalized risk measures based on traditional risk measures, such as the worst-case ES.

\subsection*{Acknowledgments}
We thank the review team, Camilo Garcia Trillos, Fabio Maccheroni, Andrea Marcina and Thorsten Schmidt for comments which led to various improvements of the paper. YL gratefully acknowledges financial support from the research startup fund at The Chinese University of Hong Kong, Shenzhen and the Natural Sciences and Engineering Research Council of Canada (RGPIN-2017-04054). 
RW acknowledges financial support from the Natural Sciences and Engineering Research Council of Canada (RGPIN-2018-03823, RGPAS-2018-522590).

\appendix
\section{Proofs of all technical results}

\begin{proof}[Proof of Proposition \ref{prop:ES}]
For a fixed $P$, convexity of  $\ES_\alpha(\cdot\vert P)$ is well-known since $\ES_\alpha(\cdot\vert P)$   is a coherent risk measure; see e.g., Theorem 4.52 of \cite{FS16}.
The non-convexity and non-concavity of $\VaR_\alpha(\cdot\vert P)$ are due to the fact that $\VaR_\alpha(\cdot\vert P)$ has a non-convex and non-concave distortion function; see e.g., Theorem 3 of \cite{WWW20}.
For a fixed $X$, note that a mixture on scenarios leads to a mixture of the distribution of $X$, that is, the distribution of $X$ under $\lambda P + (1-\lambda)Q$ is $\lambda F_{X|P} +(1-\lambda)F_{X|Q}$.
Thus, concavity with respect to scenarios corresponds to mixture concavity studied in \cite{WWW20}.  
Again, using Theorem 3 of \cite{WWW20}, $\ES_\alpha(X\vert \cdot)$ is concave and $\VaR_\alpha(X\vert \cdot)$ is neither convex nor concave.
\end{proof}

\begin{proof}[Proof of Theorem \ref{th:worstcase}]
	\begin{enumerate}[(i)] \item The ``if" statement can be directly checked since \eqref{eq:def_unc} satisfies (A1)-(A2) for a standard generalized risk measure. 
	We show the ``only if" statement below. 
	Using (A1), we have $\Psi(X\vert \Q) \ge \Psi(X\vert P)$ for all $P \in \Q$, which implies ``$\ge$" in (\ref{eq:def_unc}). Define $s_X= \sup_{P \in \Q} \Psi(X\vert P)$. As $s_X$ is a constant random variable under every $P \in \Q$, we have 
	$\Psi(X\vert P) \leq s_X= \Psi(s_X \vert P) $ for all $P \in \Q$.
	Using (A2), we have
	$
	\Psi(X\vert \Q) \leq \Psi(s_X \vert \Q) = s_X.
	$
	Thus, ``$\le$" in \eqref{eq:def_unc} follows. 
	\item For the ``only if" statement, (A1) gives the ``$\ge$" direction of \eqref{eq:def_unc} and 
	the (A3) gives the ``$\le$" direction of \eqref{eq:def_unc}. The ``if" statement is straightforward to check. \qedhere\end{enumerate}
\end{proof}

\begin{proof}[Proof of Theorem \ref{thm:law-inv}]
	The first statement can be directly checked. For the second statement, it suffices to show 
	(B2+B3)$\Rightarrow$(B1).
	Take $P,Q\in \P$ and $X,Y\in \X$ such that $X\vert _P \laweq Y  \vert  _Q$, and we denote this common distribution by $F$.
	As explained above, we aim to show that $\Psi(X\vert P ) = \Psi (Y\vert Q)$. 
	
	Let $P'=(P+Q)/2$ which is a probability measure dominating both $P$ and $Q$.  
	Since both $(\Omega,\mathcal F,P)$ and $(\Omega,\mathcal F,Q)$ are atomless, 
	so is $(\Omega,\mathcal F,P')$. 
	Hence, there exist  iid uniform $[0,1]$ random variables $U$ and $V$ under $P'$. 
	Take an arbitrary $x\in (0,1)$ and  define a probability measure $R$  as the regular conditional probability	$$R(A) = P'(A\vert U=x),~~~A\in \mathcal F.$$
	Note that $R$ is a well-defined  probability measure since the measurable space $(\Omega,\mathcal F)$ is standard Borel (see e.g., Theorem 5.1.9   of \cite{D10}).
	It is clear that $R$ is atomless since $V$ is uniformly distributed under $R$. Moreover, $R$ and $P'$ are mutually singular.  
	Since $P,Q\ll P'$, we know that $R$ and $P$ are mutually singular, and so are $R$ and $Q$. 
	By Remark 3.13 and Theorem 3.17 of \cite{SSWW19}, 
	there exists a random variable $Z$ such that the distribution of $Z$ is $F$ under both $P$ and $R$.
	Similarly, there exists a random variable $W$ such that the distribution of $W$ is $F$ under both $Q$ and $R$. 
	Therefore, we obtain the chain of equalities
	$$
	X\vert _P \laweq Z\vert_ P \laweq Z\vert_{R} \laweq W\vert_{R} \laweq W \vert_ Q \laweq Y \vert_Q,
	$$
	which implies 
	$$
	\Psi(X\vert P) =\Psi(Z\vert P) =\Psi(Z\vert {R}) =\Psi(W \vert {R}) =\Psi(W\vert Q) =\Psi(Y\vert Q).
	$$
	Hence, $\Psi$ satisfies (B1).	 
\end{proof}

\begin{proof}[Proof of Proposition \ref{th:law invariance}] 
	For the proof of (i), an application of Theorem \ref{th:worstcase} shows that $\Psi $ satisfies (A1) and (A3) if and only if 
	\begin{equation}\label{eq:th:law invariance1}
		\Psi(X\vert \mathcal Q) = \sup_{P\in \mathcal Q} \Psi  (X\vert P),~~~~(X, \mathcal Q)\in \X\times 2^\P.
	\end{equation}
	It remains to show that $\Psi$ satisfies (B1) if and only if there exists a function $\psi \in \Sigma$ such that $\Psi (X\vert P) = \psi(F_{X|P})$ for any $(X, P)\in \X\times \P$. The ``if" statement can be checked directly. 
	
	Now we prove the ``only if" statement. For any $F \in \M$, there exists a random variable $X \in \X$ and a probability measure $P \in \P$ such that $F$ is the distribution of $X$ under $P$. According to (B1), $\Psi(X \vert P)$ is irrelevant to the selection of $X$ and $P$. 
	Hence, we define a functional $\psi \in \Sigma$ via
	\begin{equation}
		\label{eq:th:law invariance2}
		\psi(F) = \Psi(X \vert P), ~~~~ F \in \M.
	\end{equation}
	Combining \eqref{eq:th:law invariance1} and \eqref{eq:th:law invariance2}, we complete the proof of (i). One can prove (ii) and (iii) similarly. 
\end{proof}

\begin{proof}[Proof of Theorem \ref{th:coh}]
	It is straightforward to check that the risk measure $\Psi$ defined by \eqref{eq:coh} satisfies (A1), (A2), (B1), (C1)-(C4) and (C0).
	Next, we show that these properties pin down \eqref{eq:coh}. 
	Using Theorem \ref{th:worstcase}, it suffices to show that, with  (B2), (C1) and (C0), the mapping $X\mapsto \Psi(X\vert P)$ has to be the expectation under $P$.  It is a well-known result that a monotone, standard, law-invariant, and additive functional has to be the mean; see e.g., the proof of Lemma A.1 of \cite{WZ21}.  Hence, these properties are enough to pin down \eqref{eq:coh}.
\end{proof}
\begin{proof}[Proof of Theorem \ref{th:new2}]
	First, we show the equivalence (ii)$\Leftrightarrow$(iii). 
	To show the direction (ii)$\Rightarrow$(iii), based on (B2) and (C1)-(C5), for a fixed $P\in \P$, we get, by Theorem 1, Lemma  2 and Theorem 3 of \cite{WWW20}, 
	there exists an increasing concave function $h_P:[0,1]\to [0,1]$ with $h_P(0)=0=1-h_P(1)$ such that  
	\begin{equation}\label{eq:th1-1}
		\Phi(X\vert P) = \int X \d (h_P\circ P) ,~~~~(X,P)\in \X\times \P. 
	\end{equation}
	For every $x\in [0,1]$ and $P,Q\in \P$, 
	there exist  $A,B\in \mathcal F$ with  $P(A)= Q(B)=x$ by Theorem 1 of \cite{D12}.
	Strong law invariance  of $\Phi$ implies 
	$$ \Phi(\id_A \vert  P) = \Phi(\id_B \vert  Q) = h_P(P(A)) = h_Q(Q(B)).$$ 
	Thus, $h_P(x) = h_Q(x)$ for all $x \in [0,1]$.
	This shows that $h_P$ does not depend on $P$, and writing $h=h_P$,
	\eqref{eq:th1-1} leads to \eqref{eq:th:new2:representation}.
	
	To show the direction (iii)$\Rightarrow$(ii), for fixed $P\in \P$, using Theorem 3 of \cite{WWW20}, $\Phi(\cdot\vert P)$ is subadditive. Monotonicity, cash-additivity, positive homogeneity and comonotonic additivity follow from Theorems 4.88 and 4.94 of \cite{FS16}. Therefore, $\Phi$ is coherent and comonotonic-additive. Strong law invariance follows from the fact that  $X\vert P\laweq Y\vert Q$ implies
	\begin{align*}
		\int  X \d (h\circ P) &= \int_{-\infty}^{0} ( h \circ P (X \geq x) - 1 ) \d x +  \int_{0}^{\infty} h \circ P (X \geq x)  \d x  
		\\ &= \int_{-\infty}^{0} ( h \circ Q (Y \geq x) - 1 ) \d x +  \int_{0}^{\infty} h \circ Q (Y \geq x)  \d x  
		= \int Y \d (h \circ Q).
	\end{align*}
	
	Next, we show (i)$\Leftrightarrow$(iii). To show (iii)$\Rightarrow$(i), as the other properties are straightforward to check, we only show ambiguity sensitivity. Because of the cash-additivity of \eqref{eq:th:new2:representation}, it suffices to check it for $X\ge 0$. For $X\ge 0$ and $P,Q\in \P$, 
	\begin{align*}
		\Psi(X\vert  \lambda P + (1-\lambda) Q)  & =  \int  X \d (h\circ (\lambda P + (1-\lambda) Q) )\\   &=    \int_{0}^{\infty} h    (\lambda P  (X \geq x)  + (1-\lambda) Q (X \geq x)  )  \d x   
		\\&  \ge  \int_{0}^{\infty}  ( \lambda (h    \circ  P)  (X \geq x)   + (1-\lambda)( h \circ Q) (X \geq x)   )    \d x  
		\\&= \lambda  \int  X \d( h \circ P )+ (1-\lambda ) \int  X \d( h \circ Q )
		\\ & = \lambda \Psi(X\vert  P) +  (1-\lambda)  \Psi (X\vert   Q).
	\end{align*}
	Moreover, for all $ A\in \mathcal F$ and $P \in \P$, we compute
	\begin{align*}
		\Psi(\id_A\lvert P) = \int_{0}^{\infty} h(P(\id_A \geq x))  \d x   
		= \int_{0}^{1} h(P(\id_A \geq x))  \d x  
		= \int_{0}^{1} h(P(A))  \d x  = h(P(A)).
	\end{align*} 
	For all $ A\in \mathcal F$ and $P,Q \in \P$ such that $P(A)=Q(A)$, we have
	\begin{align*}
		\Psi(\id_A\lvert \lambda P + (1-\lambda)Q) &=  h\left((\lambda P + (1-\lambda)Q)(A)\right)= h(P(A))\\
		&= \lambda h(P(A)) + (1-\lambda) h(Q(A))= \lambda \Psi(\id_A \lvert P ) + (1-\lambda) \Psi(\id_A\lvert Q).
	\end{align*} 
	
	To show (i)$\Rightarrow$(iii), by Theorem 1 and Lemma 2 of \cite{WWW20}, based on (B2), (C1), (C2) and (C5), for a fixed $P\in \P$, there exists an increasing function $h_P:[0,1]\to [0,1]$ with $h_P(0) = 0 = 1-h_P(1)$ such that 
	\begin{equation}\label{eq:th:new2:prime}
		\Psi(X\vert P) = \int X \d (h_P\circ P),~~~~(X,P)\in \X\times \P. 
	\end{equation}
	Also note that $h_P(x)= \Psi(\id_A\vert P)$ for all $P\in \P$ and $A\in \mathcal F$ with $P(A)=x$. Further, note that for any $P \in \P$ and any $t \in [0,1]$,
	$$
	\Psi(\id_{\{U \leq t \} } \vert P) = \int \id_{\{U \leq t \} } \d (h_P \circ P) = h_P(t).
	$$
	
	Lyapunov's convexity theorem (Theorem 5.5 of \cite{R91}) 
	states that the set $R(P,Q):=\{ (P(A),Q(A)): A\in \mathcal F\}$ is closed and convex. Hence, as $(0,0), (1,1) \in R(P, Q)$, for every $x\in [0,1]$ and $P,Q\in \P$, 
	there exists $A\in \mathcal F$ with  $P(A)= Q(A)=x$.
	Hence, the second condition in (B4) implies   
	\begin{equation}
		\label{eq:certainty2}
		h_{\lambda P + (1-\lambda )Q} (x) = \lambda h_{P} (x) + (1-\lambda)h_Q(x).
	\end{equation}
	
	Assume $P\ne Q$. There exists $B \in \mathcal F$ such that $P(B)>Q(B)$, which also implies $P(B^c)<Q(B^c)$.
	Therefore, $R(P,Q)$ contains at least one point above the diagonal line 
	and at least one point below the diagonal line. 
	Using  Lyapunov's convexity theorem again, since  $$(0,0), (1,1), (P(B), Q(B)), (P(B^c), Q(B^c)) \in R(P, Q),$$ for any $\epsilon \in (0, \frac{1}{2})$, there exists a ``rectangle" set in $R(P, Q)$, i.e., there exists some $\delta'>0$ such that 
	\begin{equation}\label{eq:rectangle}
		\left\{(x+\delta, x-\delta): x \in (\epsilon, 1-\epsilon), \delta \in (-\delta', \delta')\right\} \subseteq R(P,Q). 
	\end{equation}
	Hence, for any $\delta \in (0, \delta')$, for any $ x \in (\epsilon, 1-\epsilon)$, there exists some $B_\delta \in \F$ satisfying $P(B_\delta)=x+\delta$ and $Q(B_\delta)=x-\delta$.      
	The first condition in (B4) further gives
	\begin{align*}
		h_{  P/2  +  Q/2} (x ) &= h_{P/2  +  Q/2} \left(\frac {P(B_\delta)} 2 + \frac {Q(B_\delta)} 2 \right)  \\&\ge  
		\frac 12 h_{  P } (P(B_\delta))+
		\frac 12 h_{  Q } (Q(B_\delta))
		=
		\frac 12 h_{  P } (x+\delta )+
		\frac 12 h_{  Q } (x-\delta ).
	\end{align*}
	It then follows from \eqref{eq:certainty2} that
	$$ 
	h_Q(x) - h_{Q} (x-\delta) \geq h_{P} (x+\delta) - h_{P} (x),
	$$
	which implies 
	\begin{equation}\label{eq:th:new2:concave}
		\frac 1 \delta ( h_Q(x) -  h_{  Q } (x-\delta ))\ge 
		\frac 1 \delta (  h_{  P } (x+\delta )- h_{P} (x)).
	\end{equation}
	By the symmetry of $P$ and $Q$, for any $x \in (\epsilon, 1-\epsilon)$ and $\delta \in (0, \delta')$ sufficiently small, we similarly have 
	\begin{equation}\label{eq:th:new2:concave:sym}
		\frac 1 \delta ( h_P(x) -  h_P (x-\delta ))\ge 
		\frac 1 \delta (  h_Q (x+\delta )- h_Q (x)).
	\end{equation}
	For any $x \in (\epsilon, 1-\epsilon)$ and $\delta \in (0, \delta')$ sufficiently small, subscribing $x+\delta$ in \eqref{eq:th:new2:concave:sym}, we have
	\begin{equation}\label{eq:th:new2:concave2}
		\frac 1 \delta ( h_P(x+\delta) -  h_P (x))\ge 
		\frac 1 \delta (  h_Q (x+2\delta )- h_Q (x+\delta)),
	\end{equation}
	Combining \eqref{eq:th:new2:concave} and \eqref{eq:th:new2:concave2}, for any $x \in (\epsilon, 1-\epsilon)$ and $\delta \in (0, \delta')$ sufficiently small, we have
	\begin{equation}
		\frac 1 \delta ( h_Q(x) -  h_{  Q } (x-\delta ))\ge 
		\frac 1 \delta (  h_Q (x+2\delta )- h_Q (x+\delta)),
	\end{equation}
	which implies that $h_Q$ is concave on $(\epsilon, 1-\epsilon)$. As $\epsilon \in (0,\frac{1}{2})$ is arbitrary, we have $h_Q$ is concave on $(0,1)$. Concavity implies that $h_Q$ is absolutely continuous on $(0,1)$. Similarly, $h_P$ is also concave and hence absolutely continuous on $(0,1)$. 
	
	Note that $h_P$ and $h_Q$, as increasing functions, have derivatives almost everywhere. Letting $\delta \downarrow 0$ in \eqref{eq:th:new2:concave} gives 
	$$
	\frac{\d }{\d x} h_Q(x) \ge \frac{\d }{\d x}h_P(x)
	\mbox{~~~~ for a.e.~$x\in (0,1)$}.
	$$
	By symmetry in the positions of $P$ and $Q$, we have 
	$$
	\frac{\d}{\d x} h_P(x) \ge \frac{\d}{\d x}h_Q(x) \mbox{~~~~ for a.e.~$x\in (0,1)$},
	$$
	which further implies that 
	\begin{equation}\label{eq:th:new2:derivative}
		\frac{\d }{\d x} h_P(x) = \frac{\d}{\d x}h_Q(x) \mbox{~~~~ for a.e.~$x\in (0,1)$}.
	\end{equation}
	Using the Newton-Leibnitz formula, for any $x \in (0,1]$, we have 
	\begin{equation}\label{eq:NewtonL}
		h_P(1-) - h_P(x) = \int_{x}^{1} \frac{\d }{\d t} h_P(t) \d t = \int_{x}^{1} \frac{\d }{\d t} h_Q(t) \d t = h_Q(1-) - h_Q(x).
	\end{equation}

	We proceed to prove that $h_P(1) = h_P(1-)$ for any $P \in \P$. It is clear that $h_P(1) \geq h_P(1-)$ because $h_P$ is increasing. Write
	$$
	\tilde{h}_P(x) = \left\{
	\begin{aligned} 
		& h_P(x), ~ x \in [0, 1);\\
		& h_P(1-), ~ x = 1.
	\end{aligned}
	\right.
	$$
	Hence, for any $X \in \X$ satisfying $0 \leq X \leq 1$, we have
	\begin{equation}\label{eq:tilde_h}
		\begin{aligned}
			\Psi(X \vert P) &= \int X \d h_P \circ P\\
			&= \int X \d (h_P(1) - \tilde{h}_P(1)) \circ P + \int X \d \tilde{h}_P \circ P\\
			&= (h_P(1) - \tilde{h}_P(1)) \cdot \essinf (X \vert P) + \int X \d \tilde{h}_P \circ P\\
			&= (h_P(1) - \tilde{h}_P(1)) \cdot \essinf (X \vert P) + \int_{0}^{1} \tilde{h}_P \circ P(X \geq x) \d x\\
			& \leq (h_P(1) - \tilde{h}_P(1)) \cdot \essinf (X \vert P) + \tilde{h}_P(1). 
		\end{aligned}
	\end{equation}

	For any $\lambda \in [0,1]$, let $X$ be a random variable satisfying $X \vert P \sim \text{Bernoulli}(\lambda)$. On one hand, according to \eqref{eq:tilde_h}, we have
	\begin{equation}\label{eq:h_cont_at_one_LHS}
		\begin{aligned}
			\Psi(X \vert P) & \leq \tilde{h}_P(1). 
		\end{aligned}
	\end{equation}

	On the other hand, we define two probability measures: $Q(\cdot) = P(\cdot \vert X = 0)$ and $R(\cdot) = P(\cdot \vert X = 1)$. In fact, we have $P = (1-\lambda) Q + \lambda R$. Hence, $Q$ and $R$ are mutual singular with $Q(X = 0) = 1$ and $R(X = 1) = 1$. Hence, $\Psi(X \vert Q) = 0$ and $\Psi(X \vert R) = 1$. According to (B4), we have
	\begin{equation}\label{eq:h_cont_at_one_RHS}
		\Psi(X \vert P) \geq (1-\lambda) \Psi(X \vert Q) + \lambda \Psi(X \vert R) = \lambda.
	\end{equation}
	Combining \eqref{eq:h_cont_at_one_LHS} and \eqref{eq:h_cont_at_one_RHS}, we have $\tilde{h}_P(1) \geq \lambda$ for any $\lambda \in [0,1]$, which implies that $h_P(1-) = \tilde{h}_P(1) = 1 = h_P(1)$. 
	
	Finally, combining with \eqref{eq:NewtonL}, we have $h_P = h_Q$ on $(0,1]$. Hence, together with $h_P(0)=h_Q(0) = 0$, we have $h_P=h_Q$.  
	This shows that $h_P$ does not depend on $P$, and we write $h=h_P$. As $h$ is concave on $(0,1)$, $h(1-)=h(1)$ and $h(0+)\geq h(0)$, we know that $h$ is concave on $[0,1]$. Hence, we complete the proof of  \eqref{eq:th:new2:representation}.	
\end{proof}
\begin{proof}[Proof of Proposition \ref{th:preference}]
	It is straightforward to check that the multi-prior expected utility in \eqref{eq:decision} satisfies (E1)-(E5). 
	Below we will show  the representation \eqref{eq:decision} from (E1)-(E5).
	
	By (E1), to compare $(X,P)$ with $(Y,Q)$, it suffices to compare the distributions $F_{X|P}$ and $F_{Y|Q}$ as elements of $\mathcal M$,  the set of distributions on $\R$. Hence, the restriction of $\preceq$  to $\{(X,P):X\in \X,~P\in \mathcal P\}$ is described equivalently by  a binary relation $\preceq^*$ on $ \mathcal M$, via 
	$$
	(X, P) \preceq (Y,Q) ~\Longleftrightarrow~  F_{X|P} \preceq^* F_{Y|Q}.
	$$
	By letting $F_{X|P}=F$, $F_{Y|P}=G$ and $F_{X|Q}=F_{Y|Q}=H$,  we can translate (E4) into the following property: For any $F,G,H\in \mathcal M$ and 
	$\alpha \in (0,1)$, it holds $$F \preceq^* G \Longleftrightarrow   \alpha F+(1-\alpha)H \preceq^* \alpha G+(1-\alpha) H;$$ this is the independence axiom of \cite{NM44} on $\preceq^*$. Similarly, (E5) can be translated into the continuity axiom of \cite{NM44} on $\preceq^*$.
	Using the Von Neumann-Morgenstern utility theorem, there exists a function $u:\R\to \R$ such that
	\begin{equation}\label{eq:compareutility}
		(X, P) \preceq (Y, Q) ~\Longleftrightarrow~  F_{X|P} \preceq^* F_{Y|Q}~\Longleftrightarrow~  \E^P[u(X)] \le \E^Q[u(Y)].
	\end{equation}
	Next, we consider two general objects $(X_1,\mathcal Q_1)\in \X\times \mathcal S$ and $(X_2,\mathcal Q_2)\in \X\times \mathcal S$. First, by (E2) and (E3), there exist $Q^*_1\in \mathcal Q_1$ and $Q^*_2\in \mathcal Q_2$ such that 
	$$
	(X_1,Q^*_1) \simeq (X_1,\mathcal Q_1) \mbox{~~and~~}(X_2,Q^*_2) \simeq (X_2,\mathcal Q_2)
	$$
	Moreover, by (E2), we have $ (X_1,Q^*_1) \simeq  (X_1,\mathcal Q_1) \preceq (X_1,Q) $ for all $Q\in \mathcal Q_1$.
	By using \eqref{eq:compareutility},  we have
	$$
	\E^{Q^*_1} [u(X_1)] \le \E^{Q} [u(X_1)]~~~\mbox{for all $Q\in \mathcal Q_1$.}
	$$
	Thus, $\E^{Q^*_1} [u(X_1)]= \min_{Q\in \mathcal Q_1} \E^Q[u(X_1)]$.
	Similarly,   $\E^{Q^*_2} [u(X_1)]= \min_{Q\in \mathcal Q_2} \E^Q[u(X_2)]$.
	Suppose that $\E^{Q^*_1}[u(X_1)] \le \E^{Q^*_2}[u(X_2)]$. By using \eqref{eq:compareutility} again, we have 
	$$
	(X_1,Q^*_1) \preceq  (X_2,Q)~~~\mbox{for all $Q\in \mathcal Q_2$.}
	$$
	Using (E3),  this implies $(X_1,\mathcal Q_1) \simeq(X_1,Q^*_1)  \preceq  (X_2,\mathcal Q_2)  $. 
	Reverting the positions of $(X_1,\mathcal Q_1)$ and $(X_2,\mathcal Q_2)$, we obtain that  
	\eqref{eq:decision}  holds true. 
\end{proof}

\begin{proof}[Proof of Proposition \ref{th:robust}]
	Proposition \ref{th:law invariance} shows that $\Psi$ satisfies (A1), (A3) and (B2) if and only if there exists  $ \{\psi_P: P\in \mathcal P \} \subseteq  \Sigma$ such that
	$$
	\Psi(X\vert \mathcal Q) = \sup_{P\in \mathcal Q} \Psi(X \vert P),~~~~(X, \mathcal Q)\in \X\times 2^\P
	$$ 
	and
	\begin{equation}\label{eq:th:robust1}
		\Psi(X \vert P) = \psi_P  (F_{X|P}),~~~~(X, P)\in \X\times \P.
	\end{equation}
	It remains to show that $\Psi$ satisfies (B5) if and only if there exists $\gamma: \P \to (-\infty, \infty]$ and $\psi \in \Sigma$ such that 
	\begin{equation}\label{eq:th:robust2}
		\Psi(X \vert P) = \psi_P  (F_{X|P}) = \psi(F_{X|P}) - \gamma(P),~~~~(X, P)\in \X \times \P.
	\end{equation} 
	The ``if" statement can be checked directly. Now we proceed to prove the ``only if" statement.

	Assume that (B5) holds. For any $X, Y, Z, W \in \X$ and $P, Q \in \P$ satisfying $X\vert _P \laweq Y  \vert  _Q$ and $Z\vert _P \laweq W  \vert  _Q$, according to \eqref{eq:th:robust1} we have
	$
	\psi_P  (F_{X|P}) - \psi_Q  (F_{Y|Q}) = \psi_P  (F_{Z|P}) - \psi_Q  (F_{W|Q}).
	$ 
	Write $F_1 = F_{X \vert P} = F_{Y \vert Q}$ and $F_2 = F_{Z \vert P} = F_{W \vert Q}$. Hence, for any $F_1, F_2 \in \M$ and $P, Q \in \P$, we have
	$$
	\psi_P  (F_1) - \psi_Q  (F_1) = \psi_P  (F_2) - \psi_Q  (F_2),
	$$
	which means $\psi_P(F) - \psi_Q(F)$ is a constant for any $F \in \M$. That is, there exists a functional $g: \P \times \P \to \R$ such that
	$$
	\psi_P(F) - \psi_Q(F) = g(P, Q), ~~~~ P, Q \in \P, ~~ F \in \M.
	$$
	Fix $P \in \P$ and set $Q = \delta_0$, the Dirac measure at zero. Then
	$$
	\psi_P(F) = \psi_{\delta_0}(F) + g(P, \delta_0),~~~~ F \in \M.
	$$
	Define $\psi(\cdot) = \psi_{\delta_0}(\cdot)$ and $\gamma(\cdot) = -g(\cdot, \delta_0)$. We have
	\begin{equation}\label{eq:th:robust3}
		\psi_P(F) = \psi(F) - \gamma(P),~~~~ F \in \M.
	\end{equation}
	Since \eqref{eq:th:robust3} holds for any $P \in \P$, 
	$$
	\psi_P(F) = \psi(F) - \gamma(P),~~~~ (P, F) \in \P \times \M.
	$$
	For any $X \in \X$ and $P \in \P$, we consider $F = F_{X|P}$ and hence have
	$$
	\Psi(X \vert P) = \psi_P(F_{X \vert P}) = \psi(F_{X \vert P}) - \gamma(P),~~~~ (X, P) \in \X \times \P,
	$$
	which is exactly \eqref{eq:th:robust2}.
\end{proof}

\end{document}